\documentclass[runningheads]{llncs}
\usepackage[T1]{fontenc}
\usepackage[title]{appendix}
\usepackage{enumitem} 
\usepackage{amsmath,amsfonts,amssymb}
\usepackage{environ}
\usepackage{float}


\newcommand{\dtw}{\mathrm{DTW}}

\newcommand{\dd}{\mathinner{.\,.\allowbreak}}
\newcommand{\RLE}{\mathrm{RLE}}
\newcommand{\Oh}{\mathcal{O}}
\newcommand{\eps}{\varepsilon}

\newcommand{\repeattheorem}[1]{%
  \begingroup
  \renewcommand{\thetheorem}{\ref{#1}}%
  \expandafter\expandafter\expandafter\theorem
  \csname reptheorem@#1\endcsname
  \endtheorem
  \endgroup
}

\NewEnviron{reptheorem}[1]{%
  \global\expandafter\xdef\csname reptheorem@#1\endcsname{%
    \unexpanded\expandafter{\BODY}%
  }%
  \expandafter\theorem\BODY\unskip\label{#1}\endtheorem
}

\newcommand{\repeatlemma}[1]{%
  \begingroup
  \renewcommand{\thelemma}{\ref{#1}}%
  \expandafter\expandafter\expandafter\lemma
  \csname replemma@#1\endcsname
  \endlemma
  \endgroup
}

\NewEnviron{replemma}[1]{%
  \global\expandafter\xdef\csname replemma@#1\endcsname{%
    \unexpanded\expandafter{\BODY}%
  }%
  \expandafter\lemma\BODY\unskip\label{#1}\endlemma
}

\newtheorem{fact}[theorem]{Fact}
\newtheorem{observation}[theorem]{Observation}

\usepackage{subcaption,tikz}
\captionsetup{compatibility=false}
\usetikzlibrary{calc} 
\usetikzlibrary{decorations.pathreplacing} 

\usepackage{todonotes}

\begin{document}
\title{Pattern matching under $\dtw$ distance\thanks{This work was partially funded by the grants ANR-20-CE48-0001, ANR-19-CE45-0008 SeqDigger and ANR-19-CE48-0016 from the French National Research Agency.}}
%
%
\author{Garance Gourdel\inst{1,2} \and
Anne Driemel\inst{3} \and
Pierre Peterlongo \inst{2} \and
Tatiana Starikovskaya \inst{1}}
\authorrunning{G. Gourdel et al.}
%
\institute{
DIENS, \'{E}cole normale sup\'{e}rieure de Paris, PSL Research University,  France
\email{\{garance.gourdel,tat.starikovskaya\}@gmail.com} \and
IRISA Inria Rennes, France \email{pierre.peterlongo@inria.fr} \and
Hausdorff Center for Mathematics, University of Bonn, Germany
\email{driemel@cs.uni-bonn.de}}
\maketitle              
\begin{abstract}
In this work, we consider the problem of pattern matching under the dynamic time warping ($\dtw$) distance motivated by potential applications in the analysis of biological data produced by the third generation sequencing. To measure the $\dtw$ distance between two strings, one must ``warp'' them, that is, double some letters in the strings to obtain two equal-lengths strings, and then sum the distances between the letters in the corresponding positions. When the distances between letters are integers, we show that for a pattern $P$ with $m$ runs and a text~$T$ with $n$ runs:
\begin{enumerate}
\item There is an $\Oh(m+n)$-time algorithm that computes all locations where the $\dtw$ distance from $P$ to $T$ is at most $1$;
\item There is an $\Oh(kmn)$-time algorithm that computes all locations where the $\dtw$ distance from $P$ to $T$ is at most $k$.
\end{enumerate}
As a corollary of the second result, we also derive an approximation algorithm for general metrics on the alphabet. 

\keywords{Dynamic time warping distance \and pattern matching \and small-distance regime \and approximation algorithms}
\end{abstract}
%
%
%
\section{Introduction}
Introduced more than forty years ago~\cite{sakoe1978dynamic}, the dynamic time warping ($\dtw$) distance has become an essential tool in the time series analysis and its applications due to its ability to preserve the signal despite speed variation in compared sequences. To measure the $\dtw$ distance between two discrete temporal sequences, one must ``warp'' them, that is, replace some data items in the sequences with multiple copies of themselves to obtain two equal-lengths sequences, and then sum the distances between the data items in the corresponding positions. 

The $\dtw$ distance has been extensively studied for parametrised curves~--- sequences where the data items are points in a multidimensional space~---  specifically, in the context of locality sensitive hashing and nearest neighbour search~\cite{LSH,ANN}. In this work, we focus on a somewhat simpler, but surprisingly much less studied setting when the data items are elements of a finite set, the alphabet. Following traditions, we call such sequences \emph{strings}. 

The classical textbook dynamic programming algorithm computes the $\dtw$ distance between two $N$-length strings in $\Oh(N^2)$ time and space. Unfortunately, unless the Strong Exponential Time Hypothesis is false, there is no algorithm with strongly subquadratical time even for ternary alphabets~\cite{DBLP:conf/focs/AbboudBW15,DBLP:conf/focs/BringmannK15,DBLP:conf/icalp/Kuszmaul19}. On the other hand, very recently Gold and Sharir~\cite{DBLP:journals/talg/GoldS18} showed the first weakly subquadratic time algorithm (to be more precise, the time complexity of the algorithm is~$\Oh(N^2 \log \log  \log N / \log \log N)$). Kuszmaul~\cite{DBLP:conf/icalp/Kuszmaul19} gave a $\Oh(kN)$-time algorithm that computes the value of the distance between the strings if it is bounded by $k$, assuming that the distance between any two distinct letters of the alphabet is at least one, and used it to derive a subquadratic-time approximation algorithm for the general case. Finally, it is known that binary strings admit much faster algorithms: Abboud, Backurs, and Vassilevska Williams~\cite{DBLP:conf/focs/AbboudBW15} showed an $O(N^{1.87})$-time algorithm followed by a linear-time algorithm by Kuszmaul~\cite{DBLP:journals/corr/abs-2101-01108}. 

The problem of computing the $\dtw$ distance has also been studied in the sparse and run-length compressed settings, as well as in the low distance regime.  
In the sparse setting, we assume that most letters of the string are zeros. Hwang and Gelfand~\cite{hwang2017sparse} gave an $\Oh((s+t) N)$-time algorithm, where $s$ and $t$ denote the number of non-zero letters in each of the two strings. On sparse binary strings, the distance can be computed in $\Oh(s+t)$ time~\cite{DBLP:conf/mldm/HwangG19,mueen2016awarp}. Froese et al.~\cite{DBLP:journals/corr/abs-1903-03003} suggested an algorithm with running time $\Oh(mN+nM)$, where $M,N$ are the length of the strings, and $m, n$ are the sizes of their run length encodings. If $n \in \Oh(\sqrt{N})$ and $m \in \Oh(\sqrt{M})$, their algorithm runs in time $\Oh(nm \cdot (n+m))$. For binary strings, the $\dtw$ distance can be computed in $\Oh(nm)$ time~\cite{DBLP:conf/pkdd/DupontM15a}. 

Nishi et al.~\cite{DBLP:conf/spire/NishiNIBT20} considered the question of computing the $\dtw$ distance in the dynamic setting when the stings can be edited, and Sakai and Inenaga~\cite{DBLP:conf/isaac/SakaiI20} showed a reduction from the problem of computing the $\dtw$ distance to the problem of computing the longest increasing subsequence, which allowed them to give polynomial-time algorithms for a series of $\dtw$-related problems. 

In this work, we focus on the pattern matching variant of the problem: Given a pattern $P$ and a text $T$, one must output the smallest $\dtw$ distance between~$P$ and a suffix of $T[1 \dd r]$ for every position $r$ of the text. 

Our interest to this problem sparks from its potential applications in Third Generation Sequencing (TGS) data comparisons. TGS has changed the genomic landscape as it allows to sequence reads of few dozens of thousand of letters where previous sequencing techniques were limited to few hundred letters~\cite{amarasinghe2020opportunities}. However, TGS suffers from a high error rate (from $\approx$ 1 to 10\% depending on the used techniques) mainly due to the fact that the DNA sequences are read and thus sequenced at an uneven speed. The uneven sequencing speed has a major impact in the sequencing quality of DNA regions composed of two or more equal consecutive letters. Those regions, called \emph{homopolymers}, are hardly correctly sequenced as, due to the uneven sequencing speed, their size cannot be precisely determined~\cite{huang_homopolish_2021}. In particular, a common post-sequencing task  consists in aligning the obtained reads to a reference genome. This enables for instance to predict alternative splicing and gene expression~\cite{gonzalez2016introduction} or to detect structural variations~\cite{mahmoud2019structural}. All known aligners use the edit distance, most likely, due to the availability of software tools for the latter (see~\cite{10.1093/bioinformatics/bty191} and references therein). However, we find that the nature of TGS errors is much better described by the $\dtw$ distance, which we confirm experimentally in Section~\ref{sec:experiments}.

\paragraph{Our contribution.}  As a baseline, we show that the problem of pattern matching under the $\dtw$ distance can be solved using dynamic programming in time $\Oh(MN)$, where $M$ is the length of the pattern and $N$ of the text (Lemma~\ref{lm:recursion}). 

We then proceed to show more efficient algorithms for the low-distance regime on run-length compressible data, which is arguably the most interesting setting for the TGS data processing. Formally, in the \emph{$k$-$\dtw$ problem} we are given an integer $k > 0$, a pattern $P$ and a text $T$, and must find all positions $r$ of the text such that the smallest $\dtw$ distance between the pattern $P$ and a suffix of $T[1 \dd r]$ does not exceed~$k$. One might hope that the $\dtw$ distance is close enough to the edit distance and thus is amenable to the techniques developed for the latter, such as~\cite{LMS98,LV97}. In Appendix~\ref{sec:k=1}, we show that this is indeed the case for $k = 1$:

\begin{replemma}{1DTW}
\label{lm:1-DTW}
Given run-length encodings of a pattern $P$ and of a text $T$ over an alphabet $\Sigma$ and a distance $d: \Sigma \times \Sigma \rightarrow \mathbb{Z}^+$, the $1$-$\dtw$ problem can be solved in $\Oh(m+n)$ time, where $m$ is the number of runs in $P$ and $n$ is the number of runs in $T$. The output is given in a compressed form, with a possibility to retrieve each position in constant time.
\end{replemma}

Unfortunately, extending the approach of~\cite{LMS98,LV97} to higher values of $k$ seems to be impossible as it is heavily based on the fact that in the edit distance dynamic programming matrix the distances are non-decreasing on every diagonal, which is not the case for the $\dtw$ distance (see Fig.~\ref{fig:decreasing}). 

In Section~\ref{sec:block} we develop a different approach. Interestingly, we show that the value of any cell of the bottom row and the right column of a block of the dynamic programming table (i.e. a subtable formed by a run in the pattern and a run in the text) can be computed in constant time given a constant-time oracle access to the left column and the top row. Combining this with a compact representation of the $k$-bounded values, we obtain the following result:

\begin{theorem}\label{th:block}
Given run-length encodings of a pattern $P$ and of a text $T$ over an alphabet $\Sigma$ and a distance $d: \Sigma \times \Sigma \rightarrow \mathbb{Z}^+$, the $k$-$\dtw$ problem can be solved in $\Oh(k mn)$ time, where $m$ is the number of runs in $P$ and $n$ is the number of runs in $T$. The output is given in a compressed form, with a possibility to retrieve each position in constant time.
\end{theorem}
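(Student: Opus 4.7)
The plan is to partition the $M\times N$ dynamic programming table into $mn$ rectangular blocks $B_{i,j}$, where $B_{i,j}$ is the subtable spanned by the $i$-th run of $P$ and the $j$-th run of $T$, and to propagate from top-left to bottom-right a compact, $\Oh(k)$-size description of the top row and left column of each block. Because the pattern and text characters are constant inside a block, the $\dtw$ recurrence from Lemma~\ref{lm:recursion} reduces there to a uniform-cost recurrence with per-cell cost $c_{i,j}=d(a_i,b_j)$; this regularity, combined with capping every entry that exceeds $k$ at $\infty$, is what unlocks the block-level speedup.

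The first substantive step is the structural lemma that along any row or column of a block the capped $\dtw$ values can be represented as an $\Oh(k)$-size list of runs of equal value. Although $\dtw$ values are not monotone along rows (they can decrease along a diagonal, which is precisely why the approach of~\cite{LMS98,LV97} is inapplicable), the uniform-cost structure inside a block pins them down: each boundary cell is determined by a formula of the form $\min_{(i_0,j_0)}\{v(i_0,j_0)+c\cdot\max(i-i_0,\,j-j_0)\}$, where $(i_0,j_0)$ ranges over entry points on the top row or left column of the block and $c=c_{i,j}$. Capping at $k{+}1$ leaves only $\Oh(k)$ distinct finite values, and this min-plus formula forces each such value to occupy only a bounded number of contiguous intervals along the boundary, giving the $\Oh(k)$-piece representation that supports $\Oh(1)$-time random access to any individual cell.

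Next I would invoke the oracle alluded to in Section~\ref{sec:block}: given $\Oh(1)$-time access to the top row and left column of a block, the value at any designated cell of its bottom row or right column can be computed in $\Oh(1)$ time, by locating the optimal entry point $(i_0,j_0)$ through the closed-form cost above. Combining this oracle with the $\Oh(k)$-piece structure of the input boundaries and a monotone scan that locates only the $\Oh(k)$ breakpoints of the output boundaries, I process each block in $\Oh(k)$ time, totalling $\Oh(kmn)$ when the $mn$ blocks are swept in a standard left-to-right, top-to-bottom order. The output is the bottom row of the overall table, read off from the concatenation of the compact bottom boundaries of the $n$ blocks in the last block-row; the positions $r$ with $\dtw(P,T[\cdot\dd r])\le k$ are then returned in compressed form with constant-time retrieval.

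The main obstacle will be the oracle itself: establishing that a single output-boundary cell is computable in $\Oh(1)$ from $\Oh(1)$ oracle probes requires a careful geometric analysis of optimal warp paths through a uniform-cost rectangle, distinguishing whether the path enters from the top row or from the left column and identifying a constant number of candidate entry points using the min-plus formula above. Together with the $\Oh(k)$-piece structure of the block boundaries, this is what powers the entire argument; once both are in place, the global $\Oh(kmn)$ bound follows by a straightforward telescoping over the $mn$ blocks.
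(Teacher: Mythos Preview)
Your approach is essentially the paper's: partition into $mn$ blocks, use the closed-form boundary formula (the paper's Lemma~\ref{lm:border}) to read any bottom-row or right-column cell from the top row and left column, and carry an $\Oh(k)$-size description of each border from block to block. The ``oracle'' you anticipate as the main obstacle is exactly Lemma~\ref{lm:border}, and you identify it correctly.

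The gap is in your justification of the $\Oh(k)$-piece border representation. You argue that capping at $k{+}1$ leaves only $\Oh(k)$ distinct finite values and that the min-plus formula then forces each value to occupy ``a bounded number of contiguous intervals''; but having $\Oh(k)$ distinct values does not by itself stop them from alternating $\Omega(N)$ times along a row, and the min-plus formula alone does not obviously prevent this either. What the paper uses here --- and what your wording seems to deny --- is that the values \emph{are} monotone non-decreasing along every row and every column \emph{inside a single block} (Lemma~\ref{lm:non-decreasing}). The non-monotonicity you cite is along diagonals of the \emph{global} table across block boundaries (Fig.~\ref{fig:decreasing}); within a block the recurrence $D[a,b]=\min\{D[a{-}1,b{-}1],D[a{-}1,b],D[a,b{-}1]\}+d$ with a constant $d$ forces $D[a,b]\le D[a{+}1,b]$ and $D[a,b]\le D[a,b{+}1]$. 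This is what makes the compact representation $q^0,\ldots,q^k$ (last position with value $\le\ell$) well-defined, gives each border exactly $\Oh(k)$ pieces, and legitimises the ``monotone scan'' you later invoke without proof. Add Lemma~\ref{lm:non-decreasing} and your outline coincides with the paper's argument; without it, the $\Oh(k)$-per-block bound is not established.
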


We note that while our algorithm can be significantly faster than the baseline one, its worst-case time complexity is cubic. We leave it as an open question whether there exists an $\Oh(k \cdot (m+n))$-time algorithm.  Finally, in Section~\ref{sec:approx} we use Theorem~\ref{th:block} to derive an approximation algorithm for the general variant of pattern matching under the $\dtw$ distance. 

\begin{figure}[H]

\begin{center}
\footnotesize
\resizebox{0.8\textwidth}{!}{
\begin{tabular}{|cc||cc|cccc|c|cc|c|cccc|cc|c|c|c|c|c|}
\hline
 &   & G & G & T & T & T & T & C & T & T & A & T & T & T & T & G & G & T & G & A & T & A \\
 & 0 & 0 & 0 & 0 & \textcolor{red}{0} & 0 & 0 & 0 & 0 & 0 & 0 & 0 & 0 & 0 & 0 & 0 & 0 & 0 & 0 & 0 & 0 & 0 \\
\hline
A  & $\infty$  & 1 & 1 & 1 & 1 & \textcolor{red}{1} & 1 & 1 & 1 & 1 &{ 0 } & 1 & 1 & 1 & 1 & 1 & 1 & 1 & 1 &{ 0 } & 1 &{ 0 }\\
A  & $\infty$  & 2 & 2 & 2 & 2 & 2 & \textcolor{red}{2} & 2 & 2 & 2 &{ 0 } & 1 & 2 & 2 & 2 & 2 & 2 & 2 & 2 &{ 0 } & 1 &{ 0 }\\
\hline
T  & $\infty$  & 3 & 3 &{ 2 } &{ 2 } &{ 2 } &{ 2 } & \textcolor{red}{3} &{{ 2 }} &{{ 2 }} & 1 &{ 0 } &{ 0 } &{ 0 } &{ 0 } & 1 & 2 &{ 2 } & 3 & 1 &{ 0 } & 1\\
T  & $\infty$  & 4 & 4 &{ 2 } &{ 2 } &{ 2 } &{ 2 } & 3 & \textcolor{red}{{ 2 }} &{{ 2 }} & 2 &{ 0 } &{ 0 } &{ 0 } &{ 0 } & 1 & 2 &{ 2 } & 3 & 2 &{ 0 } & 1\\
\hline
A  & $\infty$  & 5 & 5 & 3 & 3 & 3 & 3 & 3 & 3 & \textcolor{red}{3} &{{ 2 }} & 1 & 1 & 1 & 1 & 1 & 2 & 3 & 3 &{ 2 } & 1 &{ 0 }\\
\hline
T  & $\infty$  & 6 & 6 &{ 3 } &{ 3 } &{ 3 } &{ 3 } & 4 &{ 3 } &{ 3 } & \textcolor{red}{3} &{{ 1 }} &{{ 1 }} &{{ 1 }} &{{ 1 }} & 2 & 2 &{ 2 } & 3 & 3 &{ 1 } & 1\\
\hline

\end{tabular} }
\end{center}

\caption{Consider $P = AATTAT$ and $T=GGTTTTCTTATTTTGGTGATA$. A cell $(i,j)$ contains the smallest $\dtw$ distance between $P[1\dd i]$ and $T[1\dd j]$, where the distance between two letters equals one if they are distinct and zero otherwise. A non-monotone diagonal of the table is shown in red.}  
\label{fig:decreasing}
\end{figure}


\section{Preliminaries}\label{sec:prelim}
We assume a polynomial-size alphabet $\Sigma$ with $\sigma$ \emph{letters}. A \emph{string} $X$ is a sequence of letters. If the sequence has length zero, it is called the \emph{empty string}.  Otherwise, we assume that the letters in $X$ are numbered from $1$ to $n =: |X|$ and denote the $i$-th letter by $X[i]$. We define $X[i \dd j]$ to be equal to $X[i] \dots X[j]$ which we call a \emph{substring} of $X$ if $i \le j$ and to the empty string otherwise. If $j = n$, we call a substring $X[i \dd j]$ \emph{a suffix} of $X$.

\begin{definition}[Run, Run-length encoding]
A run of a string $X$ is a maximal substring $X[i \dd j]$ such that $X[i] = X[i+1] = \ldots = X[j]$. The run-length encoding of a string $X$, $\RLE(X)$ is a sequence obtained from $X$ by replacing each run with a tuple consisting of the letter forming the run and the length of the run. For example, $\RLE(aabbbc) = (a,2) (b,3)(c,1)$.
\end{definition}

Let $d: \Sigma \times \Sigma \rightarrow \mathbb{R}^+$ be a distance function such that for any letters $a,b \in \Sigma$, $a\neq b$, we have $d(a,a) = 0$ and $d(a,b) > 0$. The dynamic time warping distance $\dtw_d(X,Y)$ between strings $X,Y \in \Sigma^\ast$ is defined as follows. If both strings are empty, $\dtw_d(X,Y) = 0$. If one of the strings is empty, and the other is not, then  $\dtw_d(X,Y) = \infty$. Otherwise, let $X = X[1] X[2] \ldots X[r]$ and $Y = Y[1] Y[2] \ldots Y[q]$.  Consider an $r \times q$ grid graph such that each vertex $(i,j)$ has (at most) three outgoing edges: one going to $(i+1,j)$ (if it exists), one to $(i+1,j+1)$ (if it exists), and one to $(i,j+1)$ (if it exists). A path $\pi$ in the graph starting at $(1,1)$ and ending at $(r,q)$ is called a \emph{warping path}, and its \emph{cost} is defined to be $\sum_{(i,j) \in \pi} d(X[i],Y[j])$. Finally, $\dtw_d(X,Y)$ is defined to be the minimum cost of a warping path for $X,Y$. Below we omit $d$ if it is clear from the context. 

Let $M = |P|$, $N = |T|$, and $D$ be an $(M+1) \times (N+1)$ table where the rows are indexed from $0$ to $M$, and the columns from $0$ to $N$ such that:
\begin{enumerate}
\item For all $j \in [0, N]$, $D[0,j] = 0$;
\item For all $i \in [1, M]$, $D[i,0] = +\infty$;
\item For all $i \in [1, M]$ and $j \in [1, N]$, $D[i,j]$ equals the smallest $\dtw$ distance between $P[1\dd i]$ and a suffix of $T[1 \dd j]$. 
\end{enumerate}
(See Fig.~\ref{fig:decreasing}.) To solve the pattern matching problem under the $\dtw$ distance, it suffices to compute the last row of the table $D$. 

\begin{replemma}{basicrecursion}\label{lm:recursion}
The table $D$ can be computed in $\Oh(MN)$ time via a dynamic programming algorithm, using the following recursion for all $1 \le i \le M, 1 \le j \le N$:
\begin{align*}
D[i,j] = 
\min\{D[i-1,j-1],D[i-1,j], D[i,j-1]\}+ d(P[i], T[j])
\end{align*}
\end{replemma}

\vspace{-20pt}

In the subsequent sections, we develop more efficient solutions for the low-distance regime on run-length compressible data. We will be processing the table $D$ by blocks, defined as follows: A subtable $D[i_p \dd j_p, i_t \dd j_t]$ is called a \emph{block} if $P[i_p\dd j_p]$ is a run in $P$ or $i_p=j_p=0$, and $T[i_t \dd j_t]$ is a run in $T$ or $i_t=j_t=0$. For $i_p,i_t > 0$, a block $D[i_p \dd j_p, i_t \dd j_t]$ is called \emph{homogeneous} if $P[i_p] = T[i_t]$. (For example, a block $D[3\dd 4][3\dd 6]$ in Fig.~\ref{fig:decreasing} is homogeneous.) A block such that all cells in it contain a value $q$, for some fixed integer $q$, is called a \emph{$q$-block}. (For example, a block $D[5 \dd 5][11\dd 14]$ in Fig.~\ref{fig:decreasing} is a $1$-block.) The \emph{border} of a block is the set of the cells contained in its top and bottom rows, as well as first and last columns. Consider a cell $(a,b)$ in $B$. We say that a block $B'$ is the \emph{top neighbour} of $B$ if it contains $(a-1,b)$, the \emph{left neighbour} if it contains $(a,b-1)$, and the \emph{diagonal neighbour} if it contains $(a-1,b-1)$. 

The following lemma is shown by induction in Appendix~\ref{sec:omitted}:

\begin{replemma}{nondecreasing}
\label{lm:non-decreasing}
Consider a block $B = D[i_p\dd j_p, i_t \dd j_t]$ and cell $(a,b)$ in it. If $i_p \leq a < j_p$, then $D[a,b] \le D[a+1,b]$ and if $i_t \leq b < j_t$, then $D[a,b] \le D[a,b+1]$.
\end{replemma}

By Lemma~\ref{lm:recursion}, inside a homogeneous block each value is equal to the minimum of its neighbours. Therefore, the values in a row or in a column cannot increase and we have the following corollary:
\begin{corollary}\label{cor:homogeneous}
Each homogeneous block is a $q$-block for some value $q$.
\end{corollary}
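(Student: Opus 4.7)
The plan is to exploit the fact that inside a homogeneous block the penalty $d(P[a],T[b])$ vanishes, so the recursion of Lemma~\ref{lm:recursion} collapses to $D[a,b]=\min\{D[a-1,b-1],D[a-1,b],D[a,b-1]\}$; this forces the values to be \emph{non-increasing} along rows and along columns within the block. Combined with Lemma~\ref{lm:non-decreasing}, which makes them \emph{non-decreasing} along rows and columns, one obtains equality throughout the block.

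More concretely, I would fix a homogeneous block $B = D[i_p \dd j_p, i_t \dd j_t]$ with common letter $c = P[i_p]=\cdots=P[j_p]=T[i_t]=\cdots=T[j_t]$, and set $q := D[i_p, i_t]$. I would then show by induction on $(a - i_p) + (b - i_t)$ that $D[a,b] = q$ for every $(a,b) \in B$. The base case $(a,b)=(i_p,i_t)$ is immediate. For the inductive step with $a > i_p$, the in-block predecessor $(a-1,b)$ satisfies $D[a-1,b] = q$ by induction; the collapsed recursion yields $D[a,b] \le D[a-1,b] = q$, while Lemma~\ref{lm:non-decreasing} gives $D[a-1,b] \le D[a,b]$, forcing $D[a,b]=q$. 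The case $a = i_p$, $b > i_t$ is symmetric, using the in-block predecessor $(a,b-1)$ and the column part of Lemma~\ref{lm:non-decreasing}.

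I do not expect any real obstacle; the only subtlety is bookkeeping. One must check that the indices appearing in the inductive step lie in the range required by Lemma~\ref{lm:non-decreasing} (namely $i_p \le a-1 < j_p$, which holds whenever $a > i_p$ and $(a,b)\in B$, and analogously for columns), and one must observe that the boundary cells of $B$ other than the top-left corner still admit an in-block predecessor along the top row or the left column, so the reduced recursion propagates $q$ to them without ever having to appeal to values outside $B$.
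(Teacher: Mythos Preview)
Your proposal is correct and follows essentially the same approach as the paper: the paper notes that by Lemma~\ref{lm:recursion} with $d=0$ each cell equals the minimum of its predecessors, hence values are non-increasing along rows and columns, which together with Lemma~\ref{lm:non-decreasing} forces constancy. Your inductive argument is just a slightly more explicit packaging of the same two-sided monotonicity observation.
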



\section{Main result: $\Oh(kmn)$-time algorithm}
\label{sec:block}
In this section, we show Theorem~\ref{th:block} that for a pattern $P$ with $m$ runs and a text~$T$ with $n$ runs gives an $\Oh(kmn)$-time algorithm. We start with the following lemma which is a keystone to our result: 

\begin{lemma}\label{lm:border}
For a block $D[i_p \dd j_p, i_t \dd j_t]$ let $h=j_p-i_p$, $w =j_t-i_t$, and $d=d(P[i_p],T[i_t])$. We have for every $i_p < x \leq j_p$:
\begin{equation}\label{eq:border-right}
D[x,j_t]= \begin{cases}
D[i_p,j_t-(x-i_p)]+(x-i_p) \cdot d \text{ if } x-i_p \leq w; \\
D[x-w,i_t]+w \cdot d \text{ otherwise}.
\end{cases}
\end{equation}
For every $i_t < y \leq j_t$:
\begin{equation}\label{eq:border-bottom}
D[j_p,y]= \begin{cases}
D[j_p-(y-i_t),i_t]+(y-i_t) \cdot d \text{ if }  y-i_t \leq h; \\
D[i_p,y-h]+h \cdot d \text{ otherwise}.
\end{cases}
\end{equation}
\end{lemma}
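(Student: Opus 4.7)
The plan is to exploit the fact that every cell inside the block has the same local cost $d$. Indeed, for $(x,y)$ with $i_p<x\leq j_p$ and $i_t<y\leq j_t$, we have $P[x]=P[i_p]$ and $T[y]=T[i_t]$, so the recursion of Lemma~\ref{lm:recursion} reduces to $D[x,y]=d+\min(D[x-1,y-1],D[x-1,y],D[x,y-1])$. Unrolling this three-way min, any such value can be written as
\[
D[x,y] \;=\; \min_{(x',y')} \bigl(D[x',y']+\max(x-x',\,y-y')\cdot d\bigr),
\]
where $(x',y')$ ranges over cells of the top row ($x'=i_p$) or left column ($y'=i_t$) of the block with $x'\leq x$ and $y'\leq y$. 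The $\max$ is the minimum number of down/right/diagonal steps from $(x',y')$ to $(x,y)$, each of which incurs cost $d$.

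Next I would establish a bounded-increment property along the top row and left column of the block: $D[i_p,y_0+1]-D[i_p,y_0]\leq d$ for $y_0\in[i_t,j_t)$, and analogously $D[x_0+1,i_t]-D[x_0,i_t]\leq d$ for $x_0\in[i_p,j_p)$. Both follow directly from Lemma~\ref{lm:recursion}: for instance, $D[i_p,y_0+1]\leq D[i_p,y_0]+d(P[i_p],T[y_0+1])=D[i_p,y_0]+d$. Together with Lemma~\ref{lm:non-decreasing}, the two border sequences are monotone with per-step increments lying in $[0,d]$.

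Given these two ingredients, deriving \eqref{eq:border-right} is a routine minimisation. Specialise the unrolled formula to $y=j_t$ and split candidates into top-row entries $(i_p,y_0)$ and left-column entries $(x_0,i_t)$. Setting $y_0^\ast:=j_t-(x-i_p)$, candidates with $y_0\geq y_0^\ast$ have cost $D[i_p,y_0]+(x-i_p)\cdot d$, minimised at the smallest admissible $y_0$ by Lemma~\ref{lm:non-decreasing}; candidates with $y_0<y_0^\ast$ have cost $D[i_p,y_0]+(j_t-y_0)\cdot d$, which by telescoping the bounded-increment bound is at least the cost at $y_0^\ast$. An analogous analysis for left-column entries singles out $x_0^\ast:=x-w$. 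A final application of the bounded-increment bound shows that whichever of $y_0^\ast, x_0^\ast$ actually lies on the corresponding border of the block, determined by the sign of $w-(x-i_p)$, dominates the surviving candidate on the opposite border, yielding exactly the two cases of \eqref{eq:border-right}. Equation~\eqref{eq:border-bottom} is obtained by the completely symmetric argument with the roles of rows and columns swapped. The main subtlety is this final three-way comparison (diagonal entry vs.\ slack entries on the same side vs.\ the diagonal entry on the opposite side), but each sub-comparison reduces to a one-line application of the bounded-increment inequality, so the proof stays compact.
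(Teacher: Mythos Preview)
Your approach is correct and essentially the same as the paper's: both reduce the computation of $D[x,j_t]$ to a minimisation over entry points on the top row and left column of the block, and both use Lemma~\ref{lm:non-decreasing} together with the step-by-$d$ bound (your ``bounded-increment property'', the paper's ``there is a path of length~$\ell$'' inequalities) to pin the minimum at the diagonal entry. The only difference is organisational: you state the unrolled formula $D[x,y]=\min_{(x',y')}\bigl(D[x',y']+\max(x-x',y-y')\cdot d\bigr)$ and the increment bound as standalone facts and then minimise, whereas the paper folds all of this into a direct case analysis on the location of the entry point $(a,b)$ of the optimal warping path.
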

\begin{proof}
For a homogeneous block, we have $d=0$, and by Corollary~\ref{cor:homogeneous} all the values in such a block are equal, hence the claim of the lemma is trivially true. 

Assume now $d > 0$. Consider $x$, $i_p < x \leq j_p$, and let us show Eq.~\ref{eq:border-right}, Eq.~\ref{eq:border-bottom} can be shown analogously. Let $\pi$ be a warping path realizing $D[x,j_t]$.
Let $(a,b)$ be the first node of $\pi$ belonging to the block.
We have $a \in [i_p,j_p]$ and $b \in [i_t,j_t]$ and either $a=i_p$ or $b=i_t$.
The number of edges of $\pi$ in the block from $(a,b)$ to $(x,j_t)$ must be minimal, else there would be a shorter path, thus it is equal to $\max\{x-a,j_t-b\}$ and $D[x,j_t]=D[a,b]+\max\{x-a,j_t-b\}\cdot d$.

\def\height{4}
\def\width{3}
\def\cell{0.5}

\newcommand{\Block}{
\node (bl) at (0,0) {};
\node (br) at ($(bl)+(\width,0)$) {};
\node (tl) at ($(bl)+(0,\height)$) {};
\node[below,xshift=-5pt] () at ($(tl)$) {$i_p$};
\node (tr) at ($(bl)+(\width,\height)$) {};
\node[xshift=-15pt] () at ($(tr)+(-\width,-\width)$)  {$i_p+w$};
\draw ($(bl)$) rectangle ($(tr)$);
\draw[dashed] ($(tr)+(0,-\width)$) -- ($(tr)+(-\width,-\width)$) {};
}

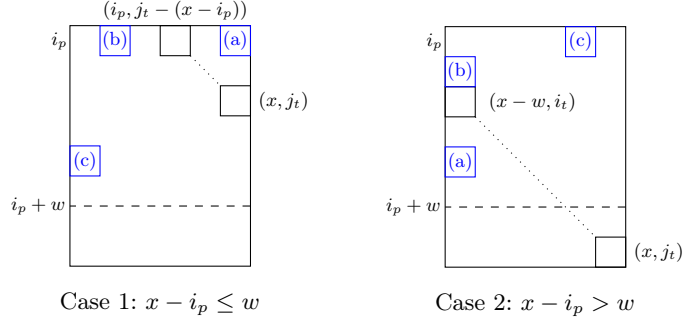
\begin{figure}
\centering

\begin{subfigure}{0.4\textwidth}
\centering
\begin{tikzpicture}[scale=0.8, every node/.style={scale=0.8}]
\Block
\node (r1) at ($(tr)+(-1.5,0)$) {};
\node (r2) at ($(tr)+(-\cell,-1.5+\cell)$) {};
\draw ($(r1)$) rectangle ($(r1)+(\cell,-\cell)$)node [midway,yshift=0.5cm] {$(i_p,j_t-(x-i_p))$};
\draw ($(r2)$) rectangle ($(r2)+(\cell,-\cell)$)node [midway,xshift=0.8cm] {$(x,j_t)$};
\draw[dotted] ($(r1)+(\cell,-\cell)$) -- ($(r2)$);

\node (ra) at ($(tr)+(-\cell,0)$) {};
\draw[blue] ($(ra)$) rectangle ($(ra)+(\cell,-\cell)$)node [midway] {(a)};
\node (rb) at ($(tl)+(0.5,0)$) {};
\draw[blue] ($(rb)$) rectangle ($(rb)+(\cell,-\cell)$)node [midway] {(b)};
\node (rc) at ($(tl)+(0,-2)$) {};
\draw[blue] ($(rc)$) rectangle ($(rc)+(\cell,-\cell)$)node [midway] {(c)};
\end{tikzpicture}
\caption*{Case 1:  $x-i_p \leq w$}
\label{fig:case1}
\end{subfigure}
\begin{subfigure}{0.4\textwidth}
\centering
\begin{tikzpicture}[scale=0.8, every node/.style={scale=0.8},]
\Block
\node (r1) at ($(bl)+(0,3)$) {};
\node (r2) at ($(r1)+(+\width-\cell,-\width+\cell)$) {};
\draw ($(r1)$) rectangle ($(r1)+(\cell,-\cell)$)node [midway,xshift=1.2cm] {$(x-w,i_t)$};
\draw ($(r2)$) rectangle ($(r2)+(\cell,-\cell)$)node [midway,xshift=0.8cm] {$(x,j_t)$};
\draw[dotted] ($(r1)+(\cell,-\cell)$) -- ($(r2)$);
\path ($(tl)+(0.5,0.55)$) rectangle ($(tl)+(\cell,-\cell)$);

\node (ra) at ($(bl)+(0,2)$) {};
\draw[blue] ($(ra)$) rectangle ($(ra)+(\cell,-\cell)$)node [midway] {(a)};
\node (rb) at ($(tl)+(0,-0.5)$) {};
\draw[blue] ($(rb)$) rectangle ($(rb)+(\cell,-\cell)$)node [midway] {(b)};
\node (rc) at ($(tl)+(2,0)$) {};
\draw[blue] ($(rc)$) rectangle ($(rc)+(\cell,-\cell)$)node [midway] {(c)};
\end{tikzpicture}
\caption*{Case 2: $x-i_p > w$}
\label{fig:case2}
\end{subfigure}

\caption{Cases of Lemma~\ref{lm:border}. Possible locations of the cell $(a,b)$ are shown in blue.}\label{fig:border}
\end{figure}

\noindent
\underline{Case 1: $x-i_p \leq w$.} Consider a cell $(i_p,j_t - (x-i_p))$. There is a path from $(i_p,j_t - (x-i_p))$ to $(x,j_t)$ that takes $x-i_p$ diagonal steps inside the block, and therefore $D[x,j_t] \leq D[i_p,j_t - (x-i_p)]+(x-i_p)\cdot d$. We now show that $D[x,j_t] \geq D[i_p,j_t - (x-i_p)]+(x-i_p)\cdot d$, which implies the claim of the lemma.
\begin{enumerate}[label=(\alph*)]
\item If \underline{$a=i_p$ and $b \geq j_t - (x-i_p)$}, then $\max\{x-i_p,j_t-b\}=x-i_p$. We have $D[x,j_t] =D[i_p,b]+(x-i_p)\cdot d  \geq D[i_p,j_t - (x-i_p)]+(x-i_p)\cdot d \text{ (Lemma~\ref{lm:non-decreasing})}$.

\item If \underline{$a=i_p$ and $b < j_t - (x-i_p)$}, then $\max\{x-i_p,j_t-b\}=j_t-b$. As there is a path from $(a,b) = (i_p,b)$ to $(i_p,j_t - (x-i_p))$ of length $(j_t - (x-i_p)-b)$, we have $D[i_p,j_t - (x-i_p)] \le D[i_p,b] + (j_t - (x-i_p)-b) \cdot d$. Consequently,
\begin{align*}
D[x,j_t]&=D[i_p,b]+(j_t-b)\cdot d \\
& \geq D[i_p,j_t - (x-i_p)] - (j_t - (x-i_p)-b) \cdot d +(j_t-b) \cdot d \text{ (Lem.~\ref{lm:recursion})}\\
& = D[i_p,j_t - (x-i_p)]+(x-i_p)\cdot d
\end{align*}
\item If \underline{$b=i_t$}, then $i_p \leq a$ and $\max\{x-a, j_t-b\} \leq \max\{x-i_p,w\} = w$. As there is a path from $(i_p,i_t)$ to $(i_p,j_t - (x-i_p))$ of length $(j_t - (x-i_p)-i_t)$, we have $D[i_p,j_t - (x-i_p)] \le D[i_p,i_t] + (j_t - (x-i_p)-i_t) \cdot d$. Therefore, 
\begin{align*}
D[x,j_t]&=D[a,i_t]+w\cdot d \geq D[i_p,i_t] +w\cdot d \text{ (Lemma~\ref{lm:non-decreasing})} \\
& \geq D[i_p,j_t - (x-i_p)] - (j_t - (x-i_p)-i_t) \cdot d +w\cdot d\\
& = D[i_p,j_t - (x-i_p)]+(x-i_p)\cdot d
\end{align*}
\end{enumerate}

\noindent
\underline{Case 2: $x-i_p > w$.} Consider a cell $(x-w,i_t)$. There is a path from $(x-w,i_t)$ to $(x,j_t)$ that takes $w$ diagonal steps inside the block, and therefore $D[x,j_t] \leq D[x-w,i_t]+w \cdot d$. We now show that $D[x,j_t] \geq D[x-w,i_t]+w\cdot d$, which implies the claim of the lemma.
\begin{enumerate}[label=(\alph*)]

\item If \underline{$b=i_t$ and $a \ge x-w$},
then $\max\{x-a,j_t-b\} = \max\{x-a,w\}=w$ and we have $D[x,j_t] =D[a,i_t]+w\cdot d \geq D[x-w,i_t] +w\cdot d \text{ (Lemma~\ref{lm:non-decreasing})}$. 

\item  If \underline{$b=i_t$ and $a < x-w$}, then $\max\{x-a,j_t-b\}= \max\{x-a,w\}=x-a$. As there is a path from $(a,i_t)$ to $(x-w,i_t)$ of length $(x-w-a)$, we have $D[x-w,i_t] \leq D[a,i_t] + (x-w-a) \cdot d$ by definition. Therefore, 
\begin{align*}
D[x,j_t]&=D[a,i_t]+(x-a)\cdot d\\
& \geq D[x-w,i_t] - (x-w-a) \cdot d+(x-a) \cdot d\\
& = D[x-w,i_t] + w\cdot d
\end{align*}
\item If \underline{$a=i_p$}, 
$b \geq i_t$ and thus $\max\{x-a,j_t-b\} \leq \max\{x-i_p,w\} = x-i_p$. Additionally, as there is a path from $(i_p,i_t)$ to $(x-w,i_t)$ of length $(x-w - i_p)$ we have $D[x-w,i_t] \le D[i_p,i_t] + (x-w-i_p) \cdot d$. Consequently,
\begin{align*}
D[x,j_t]&=D[i_p,b]+(x-i_p)\cdot d \geq D[i_p,i_t] +(x-i_p)\cdot d \text{ (Lemma~\ref{lm:non-decreasing})} \\
& \geq D[x-w,i_t] - (x-w-i_p) \cdot d +(x-i_p)\cdot d\\
& = D[x-w,i_t] + w\cdot d
\end{align*}
\end{enumerate}
\qed
\end{proof}

We say that a cell in a border of a block is \emph{interesting} if its value is at most $k$. To solve the $k$-$\dtw$ problem it suffices to compute the values of all interesting cells in the last row of $D$. Consider a block $B = D[i_p\dd j_p, i_t \dd j_t]$ and recall that the values in it are non-decreasing top to down and left to right (Lemma~\ref{lm:non-decreasing}). We can consider the following compact representation of its interesting cells. For an integer $\ell$, define $q_{\text{top}}^\ell \in [i_t,j_t]$ to be the last position such that $D[i_p,q_{\text{top}}^\ell] \le \ell$, and $q_{\text{bot}}^\ell \in [i_t,j_t]$ the last position such that $D[j_p,q_{\text{bot}}^\ell] \le \ell$. If a value is not defined, we set it equal to $i_t-1$. Analogously, define $q_{\text{left}}^\ell \in [i_p,j_p]$ to be the last position such that $D[q_{\text{left}}^\ell,i_t] \le \ell$, and $q_{\text{right}}^\ell \in [i_p,j_p]$ the last position such that $D[q_{\text{right}}^\ell,j_t] \le \ell$. If a value is not defined, we set it equal to $i_p-1$.  Positions $q_{\text{top}}^0, \ldots, q_{\text{top}}^k$ uniquely describe the interesting border cells in the top row of $B$, $q_{\text{bot}}^0, \ldots, q_{\text{bot}}^k$ in the bottom row, $q_{\text{left}}^0, \ldots, q_{\text{left}}^k$ in the leftmost column,  $q_{\text{right}}^0, \ldots, q_{\text{right}}^k$ in the rightmost column. 

\begin{lemma}\label{lm:top-left}
The compact representations of the interesting border cells in the top row and the leftmost column of a block $B$ can be computed in $\Oh(k)$ time given the compact representation of the interesting border cells in its neighbours.
\end{lemma}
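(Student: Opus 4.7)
Plan: We construct the compact representation for the top row of $B = D[i_p\dd j_p, i_t\dd j_t]$; the leftmost column is handled symmetrically, using $B_{\text{left}}$'s right column in place of $B_{\text{top}}$'s bottom row.

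\textbf{Step 1: the corner value.} By a linear scan over the $\Oh(k)$ entries of each neighbour's compact representation (exploiting Lemma~\ref{lm:non-decreasing} so that each predicate ``this specific border cell of a neighbour is $\le \ell$'' reduces to a single comparison with $q_\bullet^\ell$), we recover $u := D[i_p-1, i_t-1]$ from $B_{\text{diag}}$, $v_{i_t} := D[i_p-1, i_t]$ from $B_{\text{top}}$, and $h := D[i_p, i_t-1]$ from $B_{\text{left}}$, and set $t_{i_t} := \min(u, v_{i_t}, h) + d$ with $d := d(P[i_p],T[i_t])$. If $d = 0$, the block is homogeneous, and Corollary~\ref{cor:homogeneous} immediately yields the constant top row. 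We assume $d \ge 1$ below.

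\textbf{Step 2: reduction to parallel lines.} Writing $v_b := D[i_p-1, b]$ and $t_b := D[i_p, b]$, Lemma~\ref{lm:non-decreasing} applied to the bottom row of $B_{\text{top}}$ gives $v_{b-1} \le v_b$, so for $b > i_t$ the recursion collapses to $t_b = \min(v_{b-1}, t_{b-1}) + d$. Unrolling yields
\[
t_b = \min\Bigl(t_{i_t} + (b-i_t)d,\ \min_{j=i_t}^{b-1}\bigl(v_j + (b-j)d\bigr)\Bigr).
\]
Each term is a line of slope $d$ in the variable $b$, starting either at $(i_t, t_{i_t})$ or at $(j+1, v_j+d)$. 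For any such line with start $(b_0, \ell_0)$, the largest $b \ge b_0$ for which the line has value $\le \ell$ is $\lfloor (K+\ell)/d \rfloor$, where $K := b_0 d - \ell_0$. Because all lines share the slope $d$, the lower envelope at any fixed $b$ is attained by the currently active candidate of maximum $K$, so $q_{\text{top}}^\ell(B) = \min\bigl(j_t,\ \lfloor (K_{\max}(\ell)+\ell)/d \rfloor\bigr)$, where $K_{\max}(\ell)$ is the maximum $K$ over candidates with activation threshold $\ell_0 \le \ell$ (and $q_{\text{top}}^\ell(B)$ is undefined if no candidate is yet active).

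\textbf{Step 3: candidate enumeration and sweep.} The initial candidate is $(i_t, t_{i_t})$, with $\ell_0 = t_{i_t}$, $K = i_t d - t_{i_t}$. For each $c \in [0, k-d]$ with $R_c > R_{c-1}$, where $R_c := q_{\text{bot}}^c(B_{\text{top}})$ and $R_{-1} := i_t - 1$ (i.e., $c$ is a value newly attained by $v_j$), we add the candidate $(R_c+1, c+d)$, with $\ell_0 = c+d$, $K = R_c d - c$; smaller $j$ with the same $v_j$ value, and non-attained values of $c$, have strictly smaller $K$ and are dominated. This gives $\Oh(k)$ candidates, which we bucket by $\ell_0 \in [0, k]$ in $\Oh(k)$ time. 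A single pass $\ell = 0, 1, \ldots, k$ processes bucket $\ell$ by updating $K_{\max}$ with a running maximum and emits $q_{\text{top}}^\ell(B)$ in $\Oh(1)$ per step, for a total of $\Oh(k)$.

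The main obstacle is justifying the parallel-line reduction and checking that, despite the floor, a single running maximum $K_{\max}$ faithfully represents the lower envelope of the activated candidates; once this monotonicity-in-$K$ observation is established, the remaining bookkeeping is routine.
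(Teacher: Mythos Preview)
Your argument is correct, but it takes a noticeably more elaborate route than the paper's. After computing the corner value (your Step~1, which matches the paper), the paper observes that for any cell $(q,i_t)$ with $q>i_p$ in the leftmost column one has $D[q,i_t]\le\ell$ iff one of the three predecessors $(q-1,i_t)$, $(q-1,i_t-1)$, $(q,i_t-1)$ has value $\le\ell-d$; since these predecessors lie either in the leftmost column of $B$ itself or in the rightmost column of $B_{\text{left}}$, this yields the one-line recursion
\[
q_{\text{left}}^{\ell}\;=\;\min\bigl\{\max\{q_{\text{left}}^{\ell-d},\,r_{\text{right}}^{\ell-d}\}+1,\ j_p\bigr\},
\]
which can be evaluated for $\ell=v,v+1,\ldots,k$ in $\Oh(1)$ each. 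Your Steps~2--3 reach the same conclusion by unrolling this recursion into a minimum of parallel lines of slope $d$, extracting one dominant candidate per attained level of the neighbour's border, and running a bucket sweep on activation thresholds. That is sound (the key domination argument---that among all $j$ with $v_j=c$ the largest $j$ maximises $K=jd-c$ and hence dominates for every $\ell$---is exactly what makes the enumeration $\Oh(k)$), but the extra machinery buys nothing beyond what the direct recursion already gives: both approaches use the integrality of $d$ in the same way and achieve the same $\Oh(k)$ bound. If you want to simplify, replacing your Steps~2--3 with the paper's recursion would shorten the proof considerably without losing anything.
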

\begin{proof}
We explain how to compute the representation for the leftmost column of $B$, the representation for the top row is computed analogously.  Let $d = d(P[i_p],T[i_t])$. If $d=0$ (the block is homogeneous), by Corollary~\ref{cor:homogeneous} the block is a $q$-block for some value $q$ which can be computed in $\Oh(1)$ time by Lemma~\ref{lm:recursion} if it is interesting (and otherwise we have a certificate that the value is not interesting). We can then derive the values $q_{\text{left}}^\ell$, $\ell = 0, 1, \ldots, k$ in $\Oh(k)$ time.

Assume now $d > 0$. We start by computing $D[i_p,i_t]$ using Lemma~\ref{lm:recursion}. We note that if $D[i_p,i_t] \le k$, then we know the values of its neighbours realising it and therefore can compute it, otherwise we can certify that $D[i_p,i_t] > k$. Assume $D[i_p,i_t] = v$, which implies that $q_{\text{left}}^0, \ldots, q_{\text{left}}^{\min\{k,v\}-1}$ equal $i_p-1$. We must now compute $q_{\text{left}}^{\min\{k,v\}}, \ldots, q_{\text{left}}^{k}$. Consider a cell $(q,i_t)$ of the block with $q > i_p$.  The second to the last cell in the warping path that realizes $D[q,i_t] = \ell$ is one of the cells $(q-1,i_t)$, $(q-1,i_t-1)$ or $(q,i_t-1)$, and the value of the path up to there must be $\ell-d$. Note that all the three cells belong either to the leftmost column of $B$, or the rightmost column of its left neighbour. Consequently, for all $\min\{k,v\} < \ell \le k$, we have $q_{\text{left}}^\ell = \min\{\max\{q_{\text{left}}^{\ell-d}, r_{\text{right}}^{\ell-d}\} + 1\}, j_t\}$, and the positions $q_{\text{left}}^0, \ldots, q_{\text{left}}^{k}$ can be computed in $\Oh(k)$ time.
\qed
\end{proof}

\begin{lemma}\label{lm:bottom-right}
The compact representations of the interesting border cells in the bottom row and the rightmost column of a block $B$ can be computed in $\Oh(k)$ time given the compact representation of the interesting border cells in its leftmost column and the top row.
\end{lemma}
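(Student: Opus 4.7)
The plan is to use Lemma~\ref{lm:border}, which expresses each cell in the right column (and symmetrically, in the bottom row) as a shift of either a top-row or a left-column cell, combined with the fact that interesting cells have value at most $k$ and distances between distinct letters are at least $1$.

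First I would dispatch the homogeneous case $d(P[i_p],T[i_t]) = 0$: by Corollary~\ref{cor:homogeneous} the block is a $q$-block, so all four border representations coincide and the output is trivial to produce in $\Oh(k)$.

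For the non-homogeneous case, let $d = d(P[i_p], T[i_t]) \geq 1$ and focus on the right column; the bottom row follows by symmetry from Eq.~\ref{eq:border-bottom}. Eq.~\ref{eq:border-right} splits the right-column cells into three groups: the top cell $(i_p, j_t)$, whose value is read directly from the top-row representation; a \emph{Case~1} range $x = i_p + t$ with $t \in [1, \min(h,w)]$, of value $D[i_p, j_t - t] + t\,d$; and a \emph{Case~2} range $x = i_p + t$ with $t \in (w, h]$ (only when $h > w$), of value $D[x - w, i_t] + w\,d$. Case~2 is a \emph{uniform} shift of the left column by $(+w$ rows, $+wd$ value$)$, so for each $\ell$ the Case~2 contribution to $q_{\text{right}}^{\ell}$ is $\min\{q_{\text{left}}^{\ell - wd} + w,\, j_p\}$ when $\ell \geq wd$ and $q_{\text{left}}^{\ell - wd} > i_p$, which is $\Oh(1)$ per $\ell$ and $\Oh(k)$ overall.

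The hard part is Case~1, because the additive term $t\,d$ varies with $t$ and the case is therefore not a uniform translation of the top row; the compact representation cannot be copied across verbatim. The unlocking observation is that $D[i_p + t, j_t] \geq t\,d \geq t$, so only $t \leq k$ can yield interesting values. I therefore enumerate $t = 1, 2, \ldots, \min\{\min(h,w), k\}$ ($\Oh(k)$ values) and for each one maintain $D[i_p, j_t - t]$ by a single pointer sweeping leftward through the breakpoints $q_{\text{top}}^{0}, \ldots, q_{\text{top}}^{k}$ of the top-row representation, obtaining $f(t) := D[i_p, j_t - t] + t\,d$ in amortized $\Oh(1)$; the enumeration halts once $f(t) > k$. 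Since $f$ is non-decreasing (Lemma~\ref{lm:non-decreasing}), a final simultaneous linear scan over $\ell \in [0,k]$ and $t$ produces, in $\Oh(k)$ total, the largest $t$ with $f(t) \leq \ell$, i.e.\ the Case~1 contribution to $q_{\text{right}}^{\ell}$. Taking the maximum of the top-cell, Case~1, and Case~2 contributions yields $q_{\text{right}}^{0}, \ldots, q_{\text{right}}^{k}$. The bottom row is obtained analogously from Eq.~\ref{eq:border-bottom}, with the roles of the top row and left column, and of the parameters $h$ and $w$, interchanged.
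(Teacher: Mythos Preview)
Your proof is correct and close in spirit to the paper's, but your handling of the ``non-uniform'' case (Case~1) differs. The paper does not enumerate the $\Oh(k)$ target cells; instead it observes that Eq.~\ref{eq:border-bottom} together with the breakpoints $q_{\text{left}}^0,\ldots,q_{\text{left}}^k$ and $q_{\text{top}}^0,\ldots,q_{\text{top}}^k$ induces a partition of the entire bottom row into $\Oh(k)$ intervals on which the value is either constant (in the uniform-shift case) or a known linear function $\ell+(y-i_t)\cdot d$ (in the diagonal case), and then runs a single left-to-right scan over these intervals to produce $q_{\text{bot}}^0,\ldots,q_{\text{bot}}^k$. Your argument instead bounds Case~1 by the observation $f(t)\ge t d\ge t$, which limits the relevant $t$ to $\Oh(k)$ and lets you enumerate target cells directly; this is slightly more elementary and makes the reliance on $d\ge 1$ explicit, whereas the paper's piecewise description is a bit more uniform across the two cases and would also work verbatim if $d$ were a large integer (where your bound $t\le k$ is loose but still sufficient). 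Both arrive at the same $\Oh(k)$ bound with the same inputs.
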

\begin{proof}
We explain how to compute the representation for the bottom row, the representation for the rightmost column is computed analogously. 

Eq.~\ref{eq:border-bottom} and the compact representations of the leftmost column and the top row of $B$ partition the bottom row of $B$ into $\Oh(k)$ intervals (some intervals can be empty), and in each interval the values are described either as a constant or as a linear function. (See Fig.~\ref{fig:borders_represent}.) Formally, let $h=j_p-i_p$. By Eq.~\ref{eq:border-bottom}, for $y \in [i_t,  j_p+i_t-q^{k}_{\text{left}}-1] \cap [i_t,j_t]$ we have $D[j_p][y] > k$. For $y \in [j_p+i_t-q^\ell_\text{left}, j_p+i_t-q^{\ell-1}_\text{left}-1]  \cap [i_t,j_t]$, $\ell = k, k-1, \ldots, 1$, we have 
$D[j_p][y] = \ell + (y-i_t) \cdot d$. For $y \in [j_p+i_t-q^0_\text{left}, j_p+i_t-i_p] \cap [i_t,j_t]$ we have $D[j_p][y] = (y-i_t) \cdot d$. For $y \in [i_t+h, q_{\text{top}}^{0}+h-1] \cap [i_t,j_t]$ we have $D[j_p][y] = h \cdot d$. For $y \in [q_{\text{top}}^{\ell}+h,q_{\text{top}}^{\ell+1}+h-1] \cap [i_t,j_t], \ell = 0, 1, \ldots, k-1$, we have $D[j_p][y] = \ell + h \cdot d$. Finally, for $y \in [q_{\text{top}}^{k}+h, j_t]$, there is $D[j_p][y] > k$ again.

\def\height{4}
\def\h{4.25}
\def\width{10}
\def\cell{0.5}

\newcommand{\Block}{
\node (bl) at (0,0) {};
\node (br) at ($(bl)+(\width,0)$) {};
\node (tl) at ($(bl)+(0,\height)$) {};
\node (tr) at ($(bl)+(\width,\height)$) {};
\draw ($(bl)$) rectangle ($(tr)$);
\node[yshift=-5pt,xshift=-5pt] () at ($(tl)$) {\tiny{$i_p$}};
\node[yshift=5pt,xshift=-5pt] () at ($(bl)$) {\tiny{$j_p$}};
\node[yshift=7.5pt,xshift=5pt] () at ($(tl)$) {\tiny{$i_t$}};
\node[yshift=8pt,xshift=-5pt] () at ($(tr)$) {\tiny{$j_t$}};
}

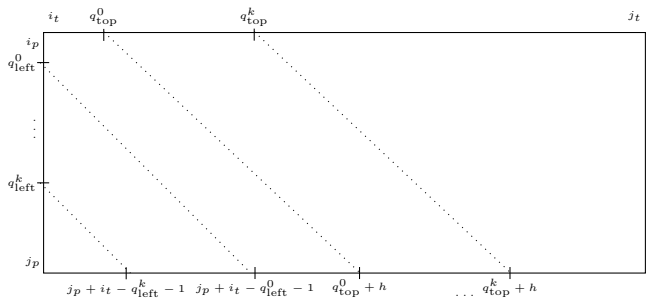
\begin{figure}
\centering

\begin{tikzpicture}[scale=0.8, every node/.style={scale=0.8}]
\Block
\node[left] (r1) at ($(tl)+(0,-.5)$) {\tiny{$q_{\text{left}}^0$}};
\draw ($(r1)+(0.25,0)$)--($(r1)+(0.45,0)$);
\node[left] (r3) at ($(tl)+(0,-1.5)$) {\tiny{$\vdots$}};
\node[left] (r4) at ($(tl)+(0,-2.5)$) {\tiny{$q_{\text{left}}^k$}};
\draw ($(r4)+(0.25,0)$)--($(r4)+(0.45,0)$);

\draw[dotted] ($(r1)+(0.3,0)$)--($(r1)+(\height-0.2,-\height+0.5)$);
\draw[dotted] ($(r4)+(0.3,0)$)--($(r4)+(\height-2.2,-\height+2.5)$);

\node[below,xshift=-2pt] (r5) at ($(r4)+(\height-2.2,-\height+2.5)$) {\tiny{$j_p+i_t-q^k_{\text{left}}-1$}};
\draw ($(r5)+(0,0.15)$)--($(r5)+(0,0.35)$);
\node[below,xshift=2pt] (r8) at ($(r1)+(\height-0.2,-\height+0.5)$) {\tiny{$j_p+i_t-q^0_{\text{left}}-1$}};
\draw ($(r8)+(0,0.15)$)--($(r8)+(0,0.35)$);

\node[above] (h1) at ($(tl)+(1,0)$) {\tiny{$q_{\text{top}}^0$}};
\draw ($(h1)+(0,-0.2)$)--($(h1)+(0,-0.4)$);
\node[above] (h4) at ($(tl)+(3.5,0)$) {\tiny{$q_{\text{top}}^k$}};
\draw ($(h4)+(0,-0.2)$)--($(h4)+(0,-0.4)$);

\draw[dotted] ($(h1)+(0,-0.25)$)--($(h1)+(\h,-\h)$);
\draw[dotted] ($(h4)+(0,-0.25)$)--($(h4)+(\h,-\h)$);
                                                                                                                                                                                                                                                                                                                                                                                                                                                                                                                                                                                                                                                                                                                                                                                                                                                        
\node[below] (h5) at ($(h1)+(\h,-\h)$) {\tiny{$q_{\text{top}}^0+h$}};
\draw ($(h5)+(0,0.15)$)--($(h5)+(0,0.35)$);
\node[below] (h7) at ($(bl)+(7,-0.25)$) {\tiny{$\ldots$}};
\node[below] (h8) at ($(h4)+(\h,-\h)$) {\tiny{$q_{\text{top}}^k+h$}};
\draw ($(h8)+(0,0.15)$)--($(h8)+(0,0.35)$);
\end{tikzpicture}

\caption{Compressed representation of interesting border cells.}\label{fig:borders_represent}
\end{figure}

By Lemma~\ref{lm:non-decreasing}, the values in the bottom row are non-decreasing. We scan the intervals from left to right to compute the values $q_{\text{bot}}^0, \ldots, q_{\text{bot}}^k$ in $\Oh(k)$ time. In more detail, let $q_{\text{bot}}^\ell$ be the last computed value, and $[i,j]$ be the next interval. We set $q_{\text{bot}}^{\ell+1} = q_{\text{bot}}^{\ell}$. If the values in the interval are constant and larger than $\ell+1$, we continue to computing $q_{\text{bot}}^{\ell+2}$. If the values are increasing linearly, we find the position of the last value smaller or equal to $\ell+1$, set $q_{\text{bot}}^{\ell+1}$ equal to this position, and continue to computing $q_{\text{bot}}^{\ell+2}$. Finally, if the values in the interval are constant and equal to $\ell+1$, we update $q_{\text{bot}}^{\ell+1} = j$ and continue to the next interval. As soon as $q_{\text{bot}}^k$ is computed, we stop the computation. 
\qed
\end{proof}

Since there are $\Oh(mn)$ blocks in total, Lemmas~\ref{lm:top-left} and  \ref{lm:bottom-right} immediately imply Theorem~\ref{th:block}.


\section{Approximation algorithm}
\label{sec:approx}
In this section, we show an approximation algorithm for computing the smallest $\dtw$ distance between a pattern $P$ and a substring of a text $T$. We assume that the $\dtw$ distance is defined over a metric on the alphabet $\Sigma$. Kuszmaul~\cite{DBLP:conf/icalp/Kuszmaul19} showed that the problem of computing the smallest $\dtw$ distance over an arbitrary metric can be reduced to the problem of computing the smallest distance over a so-called well-separated tree metric: 

\begin{definition}[Well-separated tree metric]
 Consider a rooted tree $\tau$ with positive weights on the edges whose leaves form an alphabet $\Sigma$. The tree $\tau$ specifies a metric $\mu_\tau$ on $\Sigma$: The  \emph{distance} between two leaves $a,b \in \Sigma$ is defined as the maximum weight of an edge in the shortest path from $a$ to $b$. The metric $\mu_\tau$ is a \emph{well-separated tree metric} if the weights of the edges are not increasing in every root-to-leaf path.  The \emph{depth} of $\mu_\tau$ is defined to be the depth of $\tau$.
\end{definition}

Below we show that Theorem~\ref{th:block} implies the following result for well-separated tree metrics:

\begin{lemma}\label{lm:approx}
Given run-length encodings of a pattern $P$ with $m$ runs and a text~$T$ with $n$ runs over an alphabet $\Sigma$. Assume that the $\dtw$ distance is specified by a well-separated tree metric $\mu_\tau$ on $\Sigma$ with depth $h$, and suppose that the ratio between the largest and the smallest non-zero distances between the letters of $\Sigma$ is at most exponential in $L = \max\{|P|,|T|\}$. For any $0 < \epsilon < 1$, there is an $\Oh (L^{1-\eps} \cdot hmn \log L)$-time algorithm that computes $\Oh(L^{\eps})$-approximation of the smallest $\dtw$ distance between $P$ and a substring of $T$.
\end{lemma}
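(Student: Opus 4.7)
The approach is to reduce the approximation problem to $\Oh(h\log L)$ exact computations via Theorem~\ref{th:block}. For a guess $\Gamma>0$, define the rescaled integer distance
\[
d_\Gamma(a,b) := \lfloor L^{1-\eps}\cdot d(a,b)/\Gamma\rfloor,\quad d_\Gamma(a,a):=0,
\]
where $d=\mu_\tau$; any zero value arising at distinct letters can be lifted to $1$ at an additive $\Oh(L)$ cost in the subsequent bounds. Since each floor loses at most one unit and warping paths have length at most $2L$, for every path $\pi$,
\[
\tfrac{\Gamma}{L^{1-\eps}}\cdot\mathrm{cost}_\pi(d_\Gamma)\;\le\;\mathrm{cost}_\pi(d)\;\le\;\tfrac{\Gamma}{L^{1-\eps}}\cdot(\mathrm{cost}_\pi(d_\Gamma)+2L).
\]
Minimising over warping paths from $P$ to substrings of $T$ yields
\[
\tfrac{\Gamma}{L^{1-\eps}}\,D^*_\Gamma \;\le\; D^*\;\le\;\tfrac{\Gamma}{L^{1-\eps}}\bigl(D^*_\Gamma+\Oh(L)\bigr),
\]
where $D^*$ and $D^*_\Gamma$ denote the optimal DTW values under $d$ and $d_\Gamma$ respectively.

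Setting $k:=L^{1-\eps}$ and calling Theorem~\ref{th:block} on $d_\Gamma$ takes $\Oh(L^{1-\eps}mn)$ time and determines whether $D^*_\Gamma\le k$. By the above inequalities, a ``yes'' answer implies $D^*\le \Oh(\Gamma L^\eps)$, while a ``no'' answer implies $D^*>\Gamma$. I would then sweep $\Gamma$ through a factor-$2$ geometric progression, locate the smallest $\Gamma^*$ producing a ``yes'' answer, and output $\Theta(\Gamma^*L^\eps)$. Combined with the previous ``no'' at $\Gamma^*/2$, this gives $\Gamma^*/2 < D^* \le \Oh(\Gamma^*L^\eps)$, an $\Oh(L^\eps)$-approximation.

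It remains to cap the number of guesses at $\Oh(h\log L)$; this is the main obstacle. The idea is to exploit the tree structure: every DTW value is a sum $\sum c_\ell w_\ell$ where each $w_\ell$ is a distance realised by $\mu_\tau$ (an edge weight on the path between a pair of leaves) and $\sum c_\ell \le 2L$. By pigeonhole $D^*$ is within a factor $\Oh(h)$ of some single term $c\cdot w$. Since weights along every root-to-leaf path in $\tau$ are non-increasing, only $\Oh(h)$ distinct ``scales'' $w$ can govern the dominant term, so choosing $\Gamma = 2^i w$ for $i\in[0,\lceil\log 2L\rceil]$ and each such scale produces $\Oh(h\log L)$ guesses, one of which must fall within a constant factor of $D^*/\Theta(h)$ and therefore still yields the $\Oh(L^\eps)$-approximation after absorbing a factor of $h$ into the constants. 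Multiplying $\Oh(h\log L)$ calls by $\Oh(L^{1-\eps}mn)$ per call gives the claimed running time $\Oh(L^{1-\eps}hmn\log L)$. The delicate point is this last counting argument, i.e.\ verifying that the WST property genuinely restricts the ``meaningful'' scales to $\Oh(h)$, so that the restricted guess set still catches $D^*$ up to a factor $\Oh(L^\eps)$.
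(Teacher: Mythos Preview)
Your rescaled floor-distance $d_\Gamma$ and the sandwich inequality are sound, and the gap test is the right shape. Two steps, however, do not go through as written.

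The ``lift to $1$'' is not harmless. Lifting can add up to $|\pi|\le 2L$ to every path cost, so after lifting a ``no'' answer ($\tilde D^*_\Gamma>L^{1-\eps}$) only gives $D^*_\Gamma>L^{1-\eps}-2L<0$, which is vacuous; the lower bound you need for the ``no'' branch is destroyed. You should either argue that Theorem~\ref{th:block} tolerates $d_\Gamma(a,b)=0$ for $a\neq b$ (its proof does: when $d=0$ the block is constant by the same argument as Corollary~\ref{cor:homogeneous}, regardless of whether $P[i_p]=T[i_t]$), or collapse letters at $d_\Gamma$-distance~$0$ into single symbols, which is well-defined here because $\mu_\tau$ is an ultrametric.

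More seriously, the argument capping the number of guesses at $\Oh(h\log L)$ fails. The well-separated property only says weights are non-increasing along each root-to-leaf path; it does \emph{not} bound the number of distinct edge weights (hence distinct pairwise distances) in~$\tau$, which can be $\Theta(\sigma)$. So ``only $\Oh(h)$ scales govern the dominant term'' is false, and your restricted guess set need not come within $\Oh(L^\eps)$ of $D^*$. The simple fix is to note that $D^*_\Gamma$ is monotone non-increasing in $\Gamma$, hence the yes/no answer is monotone, and you can \emph{binary-search} the $\Oh(L)$ powers of two in $[d_{\min},\,L\cdot d_{\max}]$ in $\Oh(\log L)$ iterations. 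The factor $h$ then belongs \emph{inside} each call (evaluating $\mu_\tau$ costs $\Oh(h)$), not in the number of calls, yielding the stated $\Oh(L^{1-\eps}hmn\log L)$. The paper follows a related but distinct route: it replaces $P,T$ by their $r$-simplifications $s_r(P),s_r(T)$ (collapsing letters within $\mu_\tau$-distance $r/4$), which forces every surviving nonzero distance to be at least $r/4$ and hence permits a ceiling-based $2$-approximation (Claim~\ref{claim:dtw_large}); it then binary-searches over $r=2^i$ using Fact~\ref{fact:simplified} to certify the gap. Your floor-rounding avoids the simplification machinery, but only once the zero-distance issue is handled correctly and the sweep is replaced by binary search.
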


By plugging the lemma into the framework of~\cite{DBLP:conf/icalp/Kuszmaul19}, we obtain:

\begin{reptheorem}{approx}
\label{th:approx}
Given run-length encodings of a pattern~$P$ with $m$ runs and of a text $T$ with $n$ runs over an alphabet $\Sigma$. Assume that the $\dtw$ distance is specified by a metric $\mu$ on $\Sigma$, and suppose that the ratio between the largest and the smallest non-zero distances between the letters of $\Sigma$ is at most exponential in $L = \max\{|P|,|T|\}$.
For any $0 < \epsilon < 1$, there is a $\Oh(L^{1-\eps} \cdot mn \log^3 L)$-time algorithm that computes $\Oh(L^{\eps})$-approximation of the smallest $\dtw$ distance between $P$ and a substring of $T$ correctly with high probability\footnote{The preprocessing time $\Oh(|\Sigma|^2 \log L)$ that is required to embed $\mu$ into a well-separated metric is not accounted for in the runtime of the algorithm.}.
\end{reptheorem}

The proof follows the lines of the full version~\cite{https://doi.org/10.48550/arxiv.1904.09690} of~\cite{DBLP:conf/icalp/Kuszmaul19}, we provide it in Appendix~\ref{app:approximate} for completeness. We now show Lemma~\ref{lm:approx}. Compared to~\cite{DBLP:conf/icalp/Kuszmaul19}, the main technical challenge is that our $k$-$\dtw$ algorithm (Theorem~\ref{th:block}) assumes an integer-valued distance function on the alphabet. We overcome this by developing an intermediary $2$-approximation algorithm for real-valued distances (see the two claims below). 

\paragraph{Proof of Lemma~\ref{lm:approx}.} For brevity, let $\delta$ be the smallest $\dtw_{\mu_\tau}$ distance between $P$ and a substring of $T$.

\begin{claim}\label{claim:dtw_large}
Let $0 < \eps < 1$. Assume that for all $a,b \in \Sigma$, $a \neq b$, there is $\mu_\tau(a,b) \ge \gamma$ and that the value of $\mu_\tau(a,b)$ can be evaluated in $\Oh(t)$ time. There is an $\Oh(L^{1-\eps} tmn)$-time algorithm which either computes a $2$-approximation of $\delta$ or concludes that it is larger than $\gamma \cdot L^{1-\eps}$. 
\end{claim}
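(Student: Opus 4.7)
The main obstacle is that Theorem~\ref{th:block} requires an integer-valued distance function, whereas $\mu_\tau$ is real-valued. The plan is to scale and round: I will define $d'(a,b) := \lceil \mu_\tau(a,b)/\gamma \rceil$, a non-negative integer that vanishes iff $a = b$. The assumption $\mu_\tau(a,b) \geq \gamma$ for $a \neq b$ is precisely what makes this approximation tight: whenever $a \neq b$ we have $\mu_\tau(a,b)/\gamma \geq 1$, so the additive rounding error of at most $1$ is absorbed into a multiplicative factor of $2$, giving
\[
\frac{\mu_\tau(a,b)}{\gamma} \;\leq\; d'(a,b) \;\leq\; \frac{\mu_\tau(a,b)}{\gamma}+1 \;\leq\; \frac{2\mu_\tau(a,b)}{\gamma}.
\]
Summing these inequalities along any warping path, I obtain $\dtw_{\mu_\tau}(P,S)/\gamma \leq \dtw_{d'}(P,S) \leq 2\dtw_{\mu_\tau}(P,S)/\gamma$ for every substring $S$ of $T$. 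Minimising over $S$, if $\delta' := \min_S \dtw_{d'}(P,S)$, then $\delta/\gamma \leq \delta' \leq 2\delta/\gamma$, so $\gamma \delta'$ is a $2$-approximation of $\delta$.

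The second step is to compute $\delta'$ (or certify that it is large) by invoking Theorem~\ref{th:block} on the integer-valued $d'$ with threshold $k := 2L^{1-\eps}$. Two cases will arise. If $\delta \leq \gamma L^{1-\eps}$, the upper bound gives $\delta' \leq 2\delta/\gamma \leq k$, so the compact representation of the last row of $D$ returned by Theorem~\ref{th:block} contains at least one cell with value at most $k$; I read off $\delta'$ as the minimum such value (by scanning the values $q_{\text{bot}}^\ell$ over the $\Oh(n)$ blocks in the last row) and output $\gamma \delta'$. Otherwise every cell in the last row exceeds $k$, so $\delta' > 2L^{1-\eps}$, and the lower bound then yields $\delta \geq \gamma \delta'/2 > \gamma L^{1-\eps}$, which I report.

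For the running time, note that every access to $\mu_\tau$ occurs inside an evaluation of $d'$, and, by the proof of Theorem~\ref{th:block}, each of the $\Oh(mn)$ blocks requires only a single such evaluation (to compute $d'(P[i_p], T[i_t])$ at its top-left corner) in addition to $\Oh(k)$ arithmetic work. The total cost is therefore $\Oh(mn(k + t)) = \Oh(L^{1-\eps}\, t\, mn)$, using $k,t \geq 1$ so that $k + t \leq 2kt$. The only subtlety I anticipate is confirming that the rounding does not interfere with which substring minimises the distance, but this is immediate since the two-sided inequality above holds uniformly in $(P,S)$ and hence transfers to the minimum.
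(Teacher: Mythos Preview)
Your argument is correct and matches the paper's proof essentially line for line: the same rounding $d'(a,b)=\lceil \mu_\tau(a,b)/\gamma\rceil$, the same two-sided bound $\mu_\tau/\gamma\le d'\le 2\mu_\tau/\gamma$, and the same invocation of Theorem~\ref{th:block} with threshold of order $L^{1-\eps}$. Your treatment of the running time (one distance evaluation per block plus $\Oh(k)$ arithmetic, hence $\Oh(mn(k+t))\subseteq\Oh(L^{1-\eps}tmn)$) is in fact a bit more explicit than the paper's, which simply asserts the bound.
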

\begin{proof}
Define a new distance function $\mu_\tau'(a,b) = \lceil \mu_\tau(a,b)/\gamma \rceil$. For all $a,b \in \Sigma$, $a \neq b$, we have $\mu_\tau(a,b) \leq \gamma \cdot \mu_\tau'(a,b) \le \mu_\tau(a,b) + \gamma \le 2 \mu_\tau(a,b)$. 
Consequently, for all strings $X,Y$ we have $\dtw_{\mu_\tau}(X,Y) \le \gamma \cdot \dtw_{\mu_\tau'}(X,Y) \le 2 \dtw_{\mu_\tau}(X,Y)$. 
Let $\delta' = \min_{S-\text{ substring of } T} \min\{2k+1, \dtw_{\mu_\tau'}(P,S)\}$ for $k = L^{1 - \eps}$. By Theorem~\ref{th:block}, it can be computed in $\Oh(L^{1-\eps} tmn)$ time. If $\delta' = 2L^{1 - \eps}+1$, we conclude that $\delta \geq \gamma \cdot L^{1 - \eps}$, and otherwise, output $\gamma \delta'$.
\qed\end{proof}

W.l.o.g., the minimum non-zero distance between two distinct letters of $\Sigma$ is~$1$ and the largest distance is some value $M$, which is at most exponential
in $L$. We run the algorithm above for $\gamma = 1$, which either computes a $2$-approximation of $\delta$ which we can output immediately, or concludes that $\delta \ge L^{1 -\eps}$. Below we assume that $\delta \ge L^{1 -\eps}$. 

\begin{definition}[$r$-simplification]
For a string $X \in \Sigma^\ast$ and $r \ge 1$, the
\emph{$r$-simplification} $s_r(X)$ is constructed by replacing
each letter $a$ of $X$ with its highest ancestor $a'$ in $\tau$ that can
be reached from $a$ using only edges of weight $\le r / 4$.
\end{definition}

\begin{fact}[{Corollary of~\cite[Lemma 4.6]{DBLP:conf/icalp/Kuszmaul19}, see also~\cite{DBLP:conf/compgeom/BravermanCKWY19}}]\label{fact:simplified}
For all $X,Y \in \Sigma^{\le L}$, the following properties hold:
\begin{enumerate}
  \item $\dtw_{\mu_\tau}(s_r(X), s_r(Y)) \le \dtw_{\mu_\tau}(X,Y)$.
  \item If $\dtw_{\mu_\tau}(X,Y) > L r$, then $\dtw_{\mu_\tau}(s_r(X), s_r(Y)) > L r/2$. 
\end{enumerate}
\end{fact}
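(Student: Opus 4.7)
The plan is to prove both parts by replaying warping paths between the original and simplified problems, using two structural observations. First, for every letter $a$ and its simplified image $a'$, the $\tau$-path from $a$ to $a'$ uses only edges of weight $\le r/4$, so $\mu_\tau(a, a') \le r/4$. Second, well-separation upgrades the triangle inequality to the ultrametric form $\mu_\tau(x, z) \le \max\{\mu_\tau(x, y), \mu_\tau(y, z)\}$, since the tree-path from $x$ to $z$ is contained in the union of the $x$-to-$y$ and $y$-to-$z$ paths, so its maximum edge weight is dominated by the max of the two maxes.

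For part~1, I fix an optimal warping path $\pi$ for $(X,Y)$ and replay it on $(s_r(X), s_r(Y))$, which is well-defined because simplification preserves string length. It suffices to show pointwise that $\mu_\tau(a', b') \le \mu_\tau(a, b)$ for every aligned pair $a=X[i], b=Y[j]$. I split into two cases. If $\mu_\tau(a,b) > r/4$, two applications of the ultrametric inequality give
\[
\mu_\tau(a', b') \le \max\{\mu_\tau(a', a), \mu_\tau(a, b), \mu_\tau(b, b')\} \le \max\{r/4, \mu_\tau(a,b)\} = \mu_\tau(a,b).
\]
If $\mu_\tau(a,b) \le r/4$, every edge on the tree-path $a \to \mathrm{LCA}(a,b) \to b$ has weight $\le r/4$; combined with the non-decreasing-upward property of well-separation, this forces the maximal light chains going up from $a$ and from $b$ to terminate at the same ancestor, so $a' = b'$ and $\mu_\tau(a', b') = 0$. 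Summing over $\pi$ upper-bounds $\dtw_{\mu_\tau}(s_r(X), s_r(Y))$ by $\dtw_{\mu_\tau}(X,Y)$.

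For part~2, I prove the contrapositive: if $\dtw_{\mu_\tau}(s_r(X), s_r(Y)) \le Lr/2$, then $\dtw_{\mu_\tau}(X,Y) \le Lr$. I fix an optimal warping path $\pi'$ for $(s_r(X), s_r(Y))$ and replay it on $(X, Y)$. Applying the ultrametric inequality in the reverse direction yields, for each aligned pair,
\[
\mu_\tau(a, b) \le \max\{\mu_\tau(a, a'), \mu_\tau(a', b'), \mu_\tau(b', b)\} \le \max\{r/4, \mu_\tau(a', b')\} \le r/4 + \mu_\tau(a', b').
\]
Summing over $\pi'$ and using $|\pi'| \le |X|+|Y| \le 2L$ gives $\dtw_{\mu_\tau}(X,Y) \le \dtw_{\mu_\tau}(s_r(X), s_r(Y)) + (r/4)\cdot 2L \le Lr/2 + Lr/2 = Lr$, as required.

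The main obstacle is the small-distance regime in part~1: the ultrametric inequality alone only yields $\mu_\tau(a', b') \le r/4$, which may exceed $\mu_\tau(a,b)$. Resolving this requires descending below the metric level and invoking the tree structure itself to argue that $a' = b'$ whenever $\mu_\tau(a, b) \le r/4$. Once this pointwise inequality is established, both parts reduce to routine warping-path replay arguments combined with the length bound $|\pi|, |\pi'| \le 2L$.
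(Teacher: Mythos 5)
Your proof is correct, and it is worth noting that the paper itself offers no proof of this Fact: it is imported verbatim as a corollary of Kuszmaul's Lemma~4.6, so your argument is a self-contained reconstruction of the cited result rather than a variant of anything in the paper. The mechanism you use is the right one and matches the spirit of the original: the metric $\mu_\tau$ (maximum edge weight on the tree path) is an ultrametric, each letter satisfies $\mu_\tau(a,a')\le r/4$, and both directions follow by replaying a warping path on the other pair of (equal-length) strings with a pointwise comparison of letter distances. Your treatment of the delicate case $\mu_\tau(a,b)\le r/4$ is sound: since every edge from $a$ to $\mathrm{LCA}(a,b)$ and from $b$ to $\mathrm{LCA}(a,b)$ has weight at most $r/4$, the highest light-reachable ancestor of $a$ coincides with the highest light-reachable ancestor of $\mathrm{LCA}(a,b)$, and likewise for $b$, so $a'=b'$; note that this already follows from the definition of the simplification and does not actually require the well-separation (non-increasing weights toward the leaves) property you invoke, which is harmless but superfluous here. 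The accounting in part~2 is also correct: a warping path between strings of length at most $L$ visits at most $2L$ cells, so the replayed path costs at most $\dtw_{\mu_\tau}(s_r(X),s_r(Y)) + 2L\cdot r/4$, giving the contrapositive bound $\le Lr/2 + Lr/2 = Lr$. The only (trivial) omissions are the degenerate cases of empty strings, where both distances are simultaneously $0$ or $\infty$, and the implicit extension of $\mu_\tau$ from leaves to arbitrary tree nodes (needed because simplified letters are internal nodes), which is the natural one and is what the cited works use.
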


Fix $r \ge 1$ and $0 < \eps < 1$. In the \emph{$(L^\eps, r)$-$\dtw$ gap pattern matching problem}, we must output $0$ if the smallest $\dtw$ distance between $P$ and a substring of~$T$ is at most $L^{1 - \eps} r/4$ and $1$ if it is at least $L r$, otherwise we can output either $0$ or $1$. 

\begin{claim}\label{lm:tree_metric}
The $(L^{\eps}, r)$-$\dtw$ gap pattern matching problem can be solved in $\Oh(L^{1-\eps} \cdot hmn)$ time. 
\end{claim}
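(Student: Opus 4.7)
The plan is to lift the integer-distance $k$-$\dtw$ machinery of Theorem~\ref{th:block} to the well-separated tree metric $\mu_\tau$ via $r$-simplification. First I would compute $s_r(P)$ and $s_r(T)$ directly on their run-length encodings: for each run's letter, walk up $\tau$ to identify the highest ancestor reachable using only edges of weight $\le r/4$ in $\Oh(h)$ time per run, then merge any newly-coalesced adjacent runs. This produces RLEs with at most $m$ and $n$ runs in $\Oh((m+n)h)$ time. By Fact~\ref{fact:simplified} the simplifications preserve the gap: if $\delta \le L^{1-\eps} r/4$, some substring $S^*$ of $T$ has $\dtw_{\mu_\tau}(s_r(P), s_r(S^*)) \le L^{1-\eps} r/4$, while if $\delta \ge Lr$, every substring $S$ satisfies $\dtw_{\mu_\tau}(s_r(P), s_r(S)) > Lr/2$.

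Next I would discretize the distance on the simplified alphabet. Any two distinct simplified letters $a,b$ must satisfy $\mu_\tau(a,b) > r/4$, since the path between them in $\tau$ contains an edge of weight strictly greater than $r/4$ (otherwise they would have been merged), and well-separatedness makes $\mu_\tau(a,b)$ equal to the max-weight edge on that path. Define $\mu_\tau'(a,b) = \lceil 4\mu_\tau(a,b)/r \rceil$; this is a positive integer on distinct letters, is evaluable in $\Oh(h)$ time, and satisfies $\mu_\tau(a,b) \le (r/4)\mu_\tau'(a,b) \le 2\mu_\tau(a,b)$. Hence $(4/r)\dtw_{\mu_\tau} \le \dtw_{\mu_\tau'} \le (8/r)\dtw_{\mu_\tau}$ between any two simplified strings.

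The algorithm then invokes Theorem~\ref{th:block} on $s_r(P)$ and $s_r(T)$ with distance $\mu_\tau'$ and threshold $k = \lceil 2L^{1-\eps} \rceil$, outputting $0$ if some text position has $\dtw_{\mu_\tau'} \le k$ and $1$ otherwise. In the close case, $\dtw_{\mu_\tau'}(s_r(P), s_r(S^*)) \le (8/r) \cdot L^{1-\eps} r/4 = 2L^{1-\eps} \le k$, so we correctly output $0$. In the far case, $\dtw_{\mu_\tau'}(s_r(P), s_r(S)) > (4/r) \cdot Lr/2 = 2L > k$ for every substring $S$, so we correctly output $1$. The runtime is $\Oh((m+n)h)$ for simplification, $\Oh(hmn)$ for the at-most-$mn$ distance evaluations of $\mu_\tau'$ triggered by the block algorithm (one per block), and $\Oh(kmn) = \Oh(L^{1-\eps} mn)$ for Theorem~\ref{th:block} itself, giving $\Oh(L^{1-\eps} \cdot hmn)$ overall.

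The main obstacle is choosing the scaling so that the $(L^{1-\eps} r/4,\,Lr)$ gap from $r$-simplification simultaneously survives the factor-$2$ loss of discretizing $\mu_\tau \to \mu_\tau'$ and the integer-threshold constraint of Theorem~\ref{th:block}. This is precisely the rescaling trick of Claim~\ref{claim:dtw_large}, specialized to the fact that distances between distinct simplified letters are automatically bounded below by $r/4$, which is what makes the natural choice $\gamma = r/4$ and threshold $k \approx 2L^{1-\eps}$ fit together.
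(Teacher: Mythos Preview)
Your proof is correct and takes essentially the same approach as the paper: both simplify to $s_r(P), s_r(T)$, discretize by dividing $\mu_\tau$ through by $r/4$ and rounding up (the paper delegates this step to Claim~\ref{claim:dtw_large} with $\gamma=r/4$, which you unroll inline), and then invoke Theorem~\ref{th:block} with a threshold of order $L^{1-\eps}$. The only thing the paper adds that you omit is a one-line treatment of the degenerate regime $L^{1-\eps} > L/2$ (i.e., $L=\Oh(1)$), where your strict inequality $2L > \lceil 2L^{1-\eps}\rceil$ can fail; this is a trivial patch.
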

\begin{proof}
Let $\delta_r$ be the smallest $\dtw_{\mu_\tau}$ distance between $s_r(P)$ and a substring of $s_r(T)$. 
If $L^{1-\eps} > L/2$, then $L = \Oh(1)$ and we can compute $\delta$ exactly in $\Oh(1)$ time by Lemma~\ref{lm:recursion}. Otherwise, we run the $2$-approximation algorithm for $\gamma = r/4$, which takes $O(L^{1 - \eps} \cdot hmn)$ time (we can evaluate the distance between two letters in $\Oh(h)$ time). If the algorithm concludes that $\delta_r > L^{1 - \eps} r /4$, then $\delta >  L^{1 - \eps} r /4$ by Fact~\ref{fact:simplified}, and we can output $1$. Otherwise, the algorithm outputs a $2$-approximation $\delta_r'$ of $\delta_r$, i.e. $\delta_r \le \delta_r' \le 2\delta_r$. If $\delta_r' \le L^{1 - \eps} r \le Lr / 2$, then we have $\delta_r \le Lr / 2$. Therefore, $\delta \le Lr$ by Fact~\ref{fact:simplified} and we can output~$0$. Otherwise, $\delta \ge \delta_r \ge \delta_r'/2 > L^{1 - \eps} r/2 > L^{1 - \eps} r/4$, and we can output~$1$.  
\qed\end{proof}

Consider the $(L^{\eps} / 2, 2^i)$-$\dtw$ gap pattern matching problem for $0 \le i \le \lceil \log ML \rceil$. If the $(L^{\eps} / 2, 2^0)$-$\dtw$
gap pattern matching problem returns $0$, then we know that $\delta \le L$,
and can return $L^{1 - \eps}$ as a $L^{\eps}$-approximation for~$\delta$. Therefore, it suffices to consider the case where the $(L^{\eps} / 2, 2^0)$-$\dtw$ gap pattern matching problem returns $1$. We can assume, without computing it, that the $(L^{\eps}/2, 2^{ \lceil
  \log ML \rceil})$-$\dtw$ gap pattern matching returns $0$ as $\delta \le M L$. Consequently, there must exist $i^\ast$ such that $(L^{\eps} /
2, 2^{i^\ast - 1})$-$\dtw$ gap pattern matching returns $1$ and $(L^{\eps} / 2, 2^{i^\ast - 1})$-$\dtw$ returns $0$. We can find $i^\ast$ by a binary
search which takes $\Oh(L^{1 - \eps} hmn \log \log M L) = \Oh(L^{1 - \eps} hmn \log L)$ time. We have $\delta \ge  2^{i^\ast-1} L^{1 - \eps} / 4$ and $\delta \le 2^{i^\ast} L$, and therefore can return $2^{i^\ast-1} L^{1 - \eps} / 4$ as a
$\Oh(L^{\eps})$-approximation of $\delta$.
\qed


\section{Experiments}\label{sec:experiments}
This section provides evidence of the advantage of the $\dtw$ distance over the edit distance when processing the third generation sequencing (TGS) data. Our experiment compares how the two distances are affected by biological mutation as opposed to sequencing errors, including homopolymer length errors. 

We first simulate two genomes, $G$ and $G'$, which can be considered as strings on the alphabet $\Sigma = \{A,C,G,T\}$. The genome $G$ is a substring of the E.coli genome (strain SQ110, NCBI Reference Sequence: NZ\_CP011322.1) of length $10000$ (positions $100000$ to $110000$, excluded). The genome $G'$ is obtained from $G$ by simulating biological mutations, where the probabilities are chosen according to~\cite{10.1093/molbev/msp063}. The algorithm initializes $G'$ as the empty string, and $\texttt{pos} = 1$. While $\texttt{pos} \le |G|$ it executes the following:

\begin{enumerate}
	\item With probability $0.01$, simulate a substitution: chose  uniformly at random  $a \in \Sigma$, $a \neq G[pos]$. Set $G' = G'a$ and $\texttt{pos} = \texttt{pos} + 1$. 
	\item Else, with probability $0.0005$ simulate an insertion or a deletion of a substring of length $x$, where $x$ is chosen uniformly at random from an interval $[1, \texttt{max\_len\_ID}]$, where $\texttt{max\_len\_ID}$ is fixed to $10$ in the experiments:
		\begin{enumerate}
			\item With probability $0.5$, set $\texttt{pos} = \texttt{pos} + x + 1$ (deletion);
			\item With probability $0.5$, choose a string $X \in \Sigma^x$ uniformly at random, set $G' = G'X$ and $\texttt{pos}= \texttt{pos}+1$ (insertion).  
		\end{enumerate}
	\item Else, set $G' = G'G[\texttt{pos}]$ and $\texttt{pos} = \texttt{pos} + 1$. 
\end{enumerate}

\noindent To simulate reads, we extract substrings of $G'$ and add sequencing errors: 

\begin{enumerate} 
	\item For each read, extract a substring $R$ of length $500$ at a random position of~$G'$. As $G'$ originates from $G$, we know the theoretical distance from $R$ to $G$, which we call the ``\emph{biological diversity}''. The biological diversity is computed as the sum of the number of letter substitutions, letter insertions, and letter deletions that were applied to the original substring from $G$ to obtain $R$. 
	\item Add sequencing errors by executing the following for each position $i$ of~$R$:
	\begin{enumerate}
		\item With probability $0.001$, substitute $R[i]$ with a letter $a \in \Sigma$, $a \neq R[i]$. The letter $a$ is chosen uniformly at random.  
		\item If $R[i] = R[i-1]$, insert with a  probability $p_{hom}$ a third occurrence of the same letter to simulate a homopolymer error.
	\end{enumerate}
\end{enumerate}

Fig.~\ref{fig:experiments_N_600_max_indel_length_10} shows the difference between the biological diversity and the smallest edit and $\dtw$ distances between a generated read and a substring of~$G$ depending on $p_{hom}$. It can be seen that the $\dtw$ distance gives a good estimation of the biological diversity, whereas, as expected, the edit distance is heavily affected by homopolymer errors.  To ensure reproducibility of our results, our complete experimental setup is available at \url{https://github.com/fnareoh/DTW}.


\begin{figure}
\centering
\includegraphics[scale=0.5]{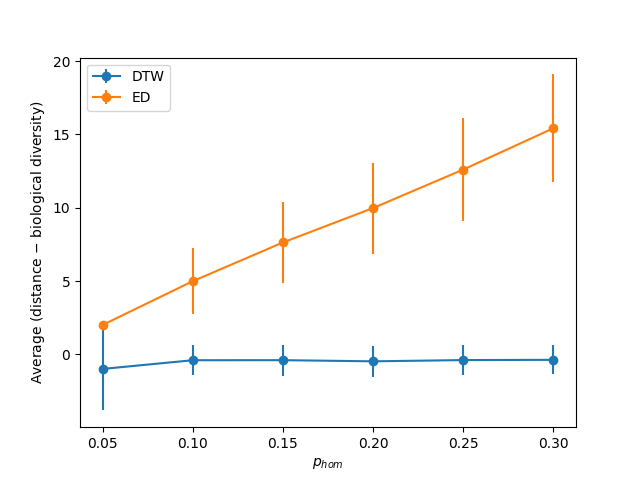}
\caption{Edit and $\dtw$ distances offset by the biological diversity as a function of $p_{hom}$. Each point is averaged over 600 reads ($\times 30$ coverage).}
\label{fig:experiments_N_600_max_indel_length_10}
\end{figure}


\bibliographystyle{splncs04}
\bibliography{main}

\begin{subappendices}
\renewcommand{\thesection}{\Alph{section}}%
\section{}
\label{sec:k=1}
In this section, we show Lemma~\ref{lm:1-DTW} that for a pattern $P$ with $m$ runs and and text $T$ with $n$ runs gives an $\Oh(m+n)$-time algorithm.  

\begin{definition}[RLE-diagonals]
We say that a sequence of blocks forms an \emph{RLE-diagonal} if the blocks are formed by runs $i, i+1, \ldots, j$ of $P$ and $i+\delta, i+1+\delta, \ldots, j+\delta$ of $T$, for some integers $i,j,\delta$. 
\end{definition}

\begin{definition}[Streak]
A $q$-streak is a maximal subsequence of an RLE-diagonal containing sequential homogeneous $q$-blocks. 
\end{definition}

\begin{observation}\label{obs:zero_streak}
If $D[i,j] = 0$, then it belongs to a $0$-streak. Furthermore, each $0$-streak necessarily starts in the first row of $D$.
\end{observation}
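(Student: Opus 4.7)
The plan splits along the two claims. For the first, I would apply Lemma~\ref{lm:recursion} directly: since every entry of $D$ is non-negative, $D[i,j]=0$ forces $d(P[i],T[j])=0$, i.e.\ $P[i]=T[j]$. The block $B$ containing $(i,j)$ is therefore homogeneous, and Corollary~\ref{cor:homogeneous} declares $B$ to be a $q$-block; the presence of a $0$ entry fixes $q=0$. Thus $(i,j)$ sits in the $0$-streak along the RLE-diagonal through $B$, which contains at least $B$ itself.

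For the second claim, let $B=D[i_p\dd j_p,i_t\dd j_t]$ be the topmost $0$-block of a given $0$-streak; I want to show $i_p=1$. Assume $i_p>1$ for contradiction. Since $D[i_p,i_t]=0$, Lemma~\ref{lm:recursion} forces at least one of $D[i_p-1,i_t-1]$, $D[i_p-1,i_t]$, $D[i_p,i_t-1]$ to equal~$0$, and I would rule out each of the three possibilities. If $D[i_p-1,i_t]=0$, then by the first claim $P[i_p-1]=T[i_t]=P[i_p]$, contradicting the fact that $i_p$ begins a fresh run of $P$. If $D[i_p,i_t-1]=0$, then necessarily $i_t>1$ (as $D[i_p,0]=\infty$), and a symmetric argument using the freshness of the $i_t$-th run of $T$ yields a contradiction. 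Finally, if $D[i_p-1,i_t-1]=0$, then again $i_t>1$, and the block containing $(i_p-1,i_t-1)$ is the diagonal neighbour of $B$ and is itself a homogeneous $0$-block by the first claim; appending it to the streak contradicts maximality.

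The main obstacle is nothing conceptually deep: I just need to be careful with the boundary case $i_t=1$, where $D[i_p,0]=D[i_p-1,0]=\infty$ collapses the set of admissible predecessors to $D[i_p-1,1]$ alone and thereby still triggers the first case's contradiction. Once the case analysis is pinned down, maximality of the streak finishes the argument.
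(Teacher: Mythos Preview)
Your proposal is correct and uses essentially the same ingredients as the paper's proof: a $0$-value forces $P[i]=T[j]$, hence a homogeneous $0$-block, and homogeneous blocks can only be diagonal neighbours (your three-case analysis is exactly this fact unpacked). The paper packages both claims into a single path argument---trace a $0$-valued warping path from $(i,j)$ back to row~$0$ and observe that the homogeneous $0$-blocks it visits must be consecutive along an RLE-diagonal---whereas you argue the second claim by contradiction on the topmost block of the streak; the underlying reasoning is the same.
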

\begin{proof}
By definition, there must be a path from the first row of $D$ to $D[i,j]$ containing $0$-values only. For every $0$-value $D[i',j']$ we must have $P[i'] = T[j']$, and therefore every such value must belong to a homogeneous $0$-block. Furthermore, two homogeneous blocks can only be neighbours diagonally, else it would contradict the maximality of the runs. The claim follows. 
\qed
\end{proof}

\begin{observation}\label{obs:one_value}
If $D[i,j] = 1$, then $D[i,j]$ belongs to a $1$-streak or neighbours a block in a $0$-streak.
\end{observation}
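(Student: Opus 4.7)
My plan is to apply the recursion of Lemma~\ref{lm:recursion} at position $(i,j)$, giving
\[ D[i,j] = \min\{D[i-1,j-1], D[i-1,j], D[i,j-1]\} + d(P[i], T[j]). \]
Since $D[i,j] = 1$ and every term on the right is a non-negative integer, the argument splits cleanly into two cases: (i) $d(P[i], T[j]) = 0$ with one of the three neighbours already equal to $1$, and (ii) $d(P[i], T[j]) = 1$ with one of the three neighbours equal to $0$. I would handle these two cases separately.

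In case (i), the equality $P[i] = T[j]$ makes the block $B$ containing $(i,j)$ homogeneous, so by Corollary~\ref{cor:homogeneous} it is a $q$-block for some integer $q$, and the value $D[i,j] = 1$ forces $q = 1$. By maximality along its RLE-diagonal, $B$ belongs to a (possibly length-one) sequence of consecutive homogeneous $1$-blocks, that is, to a $1$-streak; hence $(i,j)$ lies in a $1$-streak.

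In case (ii), $P[i] \neq T[j]$, so the block $B = D[i_p \dd j_p, i_t \dd j_t]$ containing $(i,j)$ is non-homogeneous. I would argue that the $0$-valued neighbour cell $(a,b) \in \{(i-1,j-1), (i-1,j), (i,j-1)\}$ cannot lie inside $B$: every $(i',j') \in B$ satisfies $P[i'] = P[i_p] \neq T[i_t] = T[j']$, and the contrapositive of the first part of the proof of Observation~\ref{obs:zero_streak} then forces $D[i',j'] \geq 1$. Consequently $(a,b)$ lies in the top, left, or diagonal neighbour block $B'$ of $B$, and invoking Observation~\ref{obs:zero_streak} on $(a,b)$ itself places $B'$ inside a $0$-streak. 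Thus $B$ neighbours a block in a $0$-streak, as required.

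The step I expect to require the most care is the ``cannot lie inside $B$'' argument in case (ii): it is what upgrades the local information ``a neighbouring cell has value $0$'' into the desired global conclusion that $B$ neighbours a $0$-streak block (as opposed to merely containing a $0$ itself). This hinges entirely on the tight coupling between cells of value $0$ and homogeneous blocks supplied by Observation~\ref{obs:zero_streak}, so the proof is essentially a short case analysis rather than a new technical idea.
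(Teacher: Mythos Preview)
Your proof is correct and follows essentially the same approach as the paper: split on whether $P[i]=T[j]$, and in the non-equal case locate a value-$0$ predecessor of $(i,j)$ and invoke Observation~\ref{obs:zero_streak} on it. The only cosmetic difference is that the paper phrases the second case via the full optimal warping path (all nodes before $(i,j)$ have cost $0$), whereas you use the recursion of Lemma~\ref{lm:recursion} directly to exhibit one adjacent $0$-cell; your extra step showing that this $0$-cell cannot lie in the non-homogeneous block $B$ is fine but not strictly needed, and your final sentence should read ``$(i,j)$ neighbours a block in a $0$-streak'' rather than ``$B$ neighbours\ldots'', since the statement concerns the cell, not its block---your argument already yields this cell-level adjacency via $(a,b)$.
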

\begin{proof}
If $P[i]=T[j]$, we are in a homogeneous block and $D[i,j]$ belongs to a $1$-streak, and we are done. Otherwise, we have $P[i] \neq T[j]$ and there is a path $(i_1,j_1), (i_2,j_2), \ldots, (i_q,j_q)$ such that $i_1 = 1$, $(i_q,j_q) = (i,j)$, and $D[i_{q},j_{q}] = \sum_{q'=1}^{q} d(P[i_{q'}], T[j_{q'}])$. As $d(P[i_q],T[i_q]) \ge 1$, it follows that $d(P[i_q],T[i_q]) = 1$ for all $1 \le q' \le q-1$, $d(P[i_{q'}], T[j_{q'}]) = 0$, and therefore $D[i_{q'}, j_{q'}]$ must belong to a $0$-streak by Observation~\ref{obs:zero_streak}. 
\qed
\end{proof}

\repeatlemma{1DTW}
\begin{proof}
For a string $S$, define $\overline{RLE}(S)$ to be a string such that $\overline{RLE}(S)[i]$ contains the letter forming the $i$-th run of $S$. For example, $\overline{RLE}(aabbbc) = abc$. We preprocess $P' = \overline{\RLE}(P)$ and $T' = \overline{\RLE}(T)$ in $\Oh(m+n)$ time and space to maintain longest common suffix queries in constant time~\cite{RMQ}. The input of a longest common suffix query are two positions $i,j$ of $P'$ and $T'$ respectively, and the output is the largest $\ell$ such that $P'[i-\ell \dd i] = T'[j-\ell \dd j]$.

Let $B_i$, $1 \le i \le n$, be the block of $D$ formed by the $m$-th run in $P$ and the $i$-th run in $T$. Using one longest common suffix query for each block $B_i$, we find the maximal streak containing it. If this streak reaches the first row of $D$, it is a $0$-streak and all the values in the bottom row of $B_i$ are zeros. 

We must now decide which entries in the $M$-th row of $D$ must be filled with one. Consider an entry $D[M,\ell] \neq 0$ that belongs to a block $B_i$. 

If $B_i$ is contained in a streak of length at least one, then for $D[M,\ell]$ to be equal to one, it must be a $1$-streak. Consider the first block in the maximal  streak containing $B_i$, and let $c$ be the cell in its top left corner. Because $c$ can not be equal to zero, it suffices to check whether the value in $c$ equals one. Consider a path realizing the value of $c$. It goes either through the left neighbour $\ell$ of~$c$, the top neighbour $t$ of $c$, or the diagonal neighbour $d$ of $c$. Furthermore, the value in $c$ equals the minimum of the values in $\ell, d, t$. Therefore, the value in $c$ equals one iff one of the values in $\ell, d, t$ equals one. Note that neither of $\ell, d, t$ belongs to a streak. By Observation~\ref{obs:one_value}, for the value in a cell $\ell$, $d$, or $t$ to be equal to one, the cell must neighbour a block in a zero-streak. For each block neighbouring the cells $\ell, d, t$, we use one longest common suffix query to decide whether they are contained in a $0$-streak. If they are, then we can compute the value in that cell and if it equals one, the value in $c$ and all the cells in the bottom row of $B_i$ equal one as well. 

Suppose now that $B_i$ does not belong to a streak. For $D[M,\ell]$ to be equal to one, it must neighbour a block in a $0$-streak. Therefore, there can be only one such cell in $B_i$, the one in the left bottom corner, and we can decide whether the value in it equals to one in constant time similar to above. 
\qed
\end{proof}

\section{}
\label{sec:omitted}
\repeatlemma{basicrecursion}
\begin{proof}
If $i = 0$, then for all $j$, $D[i,j]$ equals the minimum distance between the empty prefix of $P$ and a suffix of $T[1 \dd j]$, which is zero by the definition. If $i > 1$ and $j = 0$, then $D[i,j]$ equals the minimum distance between a non-empty prefix of $P$ and the empty string, which is $\infty$ by the definition.

Assume $i, j \geq 1$. Let us show that $D[i,j] \ge 
\min\{D[i-1,j-1],D[i-1,j], D[i,j-1]\}+ d(P[i], T[j])$ and $D[i,j] \le 
\min\{D[i-1,j-1],D[i-1,j], D[i,j-1]\}+ d(P[i], T[j])$, which implies equality. 
We start by showing the first inequality. Recall that $D[i,j]$ is the smallest $\dtw$ distance between $P[1 \dd i]$ and a suffix of $T[1 \dd j]$. Let this minimum be realised by a suffix $T[j' \dd j]$, where $1 \le j' \le j$ (by definition, $T[j' \dd j]$ is not empty: the distance from $P[1 \dd i]$ to a non-empty suffix is finite, while that to the empty suffix equals $\infty$). Let $\pi$ be a warping path such that its cost equals $\dtw(P[1 \dd i],T[j' \dd j])$. Consider the last edge in $\pi$. If it is from $(i-a,j-b)$ to $(i,j)$, where $a,b \in \{0,1\}$ and $a+b>0$, then 
\begin{align*}
\dtw(P[1 \dd i],&T[j' \dd j]) \\
&\ge d(P[i], T[j]) + \dtw(P[1 \dd i-a],T[j' \dd j-b]) \\
& \ge d(P[i], T[j])  + D[i-a,j-b]\\
& \ge d(P[i], T[j]) + \min\{D[i-1,j-1],D[i-1,j], D[i,j-1]\}
\end{align*}

We now show the second inequality. Let $D[i-a,i-b] = \min\{D[i-1,j-1],D[i-1,j], D[i,j-1]\}$, where $a,b \in \{0,1\}$ and $a+b > 0$. Assume that $D[i-a,j-b]$ is realised on $P[1\dd i-a]$ and $T[j'\dd j-b]$ and a warping path~$\pi$. We can then consider a warping path $\pi' = \pi \cup e$, where $e$ is an edge from $(i-a,j-b)$ to $(i,j)$ for $P[1 \dd i]$ and $T[j'\dd j]$. We have 

\begin{align*}
D[i,j] &\le \dtw(P[1 \dd i], T[j'\dd j]) \le \sum_{(x,y) \in \pi'} d(P[x],T[y])\\  
&= d(P[i],T[j]) + \sum_{(x,y) \in \pi} d(P[x],T[y]) \\
&= d(P[i],T[j]) + \dtw(P[1\dd i-a],T[j'\dd j-b]) \\
& = d(P[i],T[j]) + D[i-a,j-b] \\
&= \min\{D[i-1,j-1],D[i-1,j], D[i,j-1]\}+ d(P[i], T[j])
\end{align*}
\qed\end{proof}

\repeatlemma{nondecreasing}
\begin{proof}
Let us first give an equivalent statement of the lemma: if $(a,b)$ and $(a+1,b)$ are in the same block, then $D[a,b] \le D[a+1,b]$, and if $(a,b)$ and $(a,b+1)$ are in the same block, then $D[a,b] \le D[a,b+1]$. 

We show the lemma by induction on $a+b$. The base of the induction are the cells such that $a = 0$ or $b = 0$, and for them the statement holds by the definition of $D$. Consider now a cell $(a,b)$, where $a,b \ge 1$. Assume that the induction assumption holds for all cells $(x,y)$ such that $x+y < a+b$. By Lemma~\ref{lm:recursion}, we have:
\begin{align*}
&D[a, b] = \min \{ D[a-1, b-1], D[a-1, b], D[a, b-1]\} +d\\
&D[a+1, b] = \min \{ D[a, b-1], D[a, b], D[a+1, b-1]\} + d\\
&D[a, b+1] = \min \{ D[a-1, b], D[a-1, b+1], D[a, b]\} + d\\
\end{align*}
Assume that $(a,b)$ and $(a+1,b)$ are in the same block. 
We have $D[a,b] \leq D[a, b-1]+d$ and trivially $D[a,b] \leq D[a,b] + d$.
By the induction assumption, $D[a,b-1] \leq D[a+1,b-1]$ (the cells $(a,b-1)$ and $(a+1,b-1)$ must belong to the same block).
Therefore, 
\begin{align*}
D[a+1,b] & = \min \{ D[a, b-1], D[a, b], D[a+1, b-1]\} + d \\
& = \min \{ D[a, b-1] + d, D[a, b] + d, D[a+1, b-1] + d\} \\
& \ge \min \{D[a,b], D[a,b], D[a,b-1]+d\} \\
& \ge \min\{D[a,b], D[a,b], D[a,b]\} = D[a,b]. 
\end{align*}
Assume now that $(a,b)$ and $(a,b+1)$ are in the same block.
We have $D[a,b] \leq D[a-1, b]+d$. Furthermore, as $(a-1,b)$ and $(a-1,b+1)$ are in the same block, we have $D[a-1,b] \leq D[a-1,b+1]$ by the induction assumption. Therefore,
\begin{align*}
D[a,b+1] & = \min \{ D[a-1, b], D[a-1, b+1], D[a, b]\} + d\\
& = \min \{ D[a-1, b] + d, D[a-1, b+1] + d, D[a, b] + d\}\\
& \ge \min \{D[a-1,b]+d, D[a-1,b]+d, D[a,b]\}\\ 
& \ge \min\{D[a,b], D[a,b], D[a,b]\} = D[a,b]. 
\end{align*}
This concludes the proof of the lemma.
\qed\end{proof}

\section{}
\label{app:approximate}
\repeattheorem{approx}
\begin{proof}
Any metric $\mu$ can be embedded in $\Oh(\sigma^2)$ time into a well-separated tree metric $\mu_\tau$ of depth $\Oh(\log \sigma)$ with expected distortion $\Oh(\log \sigma)$ (see~\cite{embedding} and~\cite[Theorem 2.4]{treedepth}). Furthermore, the ratio between the smallest distance and the largest distance grows at most polynomially. Formally, for any two letters $a, b$ we have $\mu(a,b) \le \mu_\tau(a,b)$ and $\mathbb{E}(\mu_\tau(a,b)) \le \Oh(\log \sigma) \cdot d(a,b)$. 
Therefore, we have: 
\begin{align}
\label{eq:embedding_lb}
\dtw_{\mu}(X,Y) &\le \dtw_{\mu_\tau}(X,Y)
\end{align}
\begin{align}
\label{eq:embedding_ub}
\mathbb{E}(\dtw_{\mu_\tau}(X,Y)) &\le \Oh(\log \sigma) \cdot \dtw_\mu (X,Y)
\end{align}
Let $\delta = \min_{S-\text{ substr. of }T} \dtw_\mu (P,S)$ and $\delta_\tau = \min_{S-\text{ substr. of }T} \dtw_{\mu_\tau} (P,S)$. Assume that $\delta$ is realised on a substring $X$, and $\delta_\tau$ on a substring $X_\tau$. By Eq.~\ref{eq:embedding_lb}, we then obtain:
$$\delta = \dtw_\mu(P,X) \le \dtw_\mu (P,X_\tau) \le \delta_\tau$$
And Eq.~\ref{eq:embedding_ub} gives the following:
$$\mathbb{E}(\delta_\tau) \le \mathbb{E}(\dtw_{\mu_\tau} (P,X)) \le \Oh(\log \sigma) \cdot \dtw_\mu(P,X) = \Oh(\log \sigma) \cdot \delta$$
We apply the embedding $\log L$ times independently to obtain well-separated tree metrics $\mu_\tau^i$, $i = 1, 2, \ldots, \log L$. From above and by Chernoff bounds, 
$$\min_i \min_{S-\text{ substring of }T} \dtw_{\mu_\tau}^i(P,S)$$
gives an $\Oh(\log \sigma) = \Oh(\log L)$ approximation of $\delta$ with high probability and can be computed in time $\Oh (L^{1-\eps} \cdot mn \log^3 L)$ by Lemma~\ref{lm:approx}, concluding the proof of the theorem. 
\qed\end{proof}

\end{subappendices}

\end{document}